\documentclass[11pt]{article}
\usepackage[top=1in,bottom=1in,left=1in,right=1in]{geometry}
\usepackage{amssymb,amsmath,enumerate,float,bbm,bigints}
\usepackage[colorlinks,linkcolor=blue,anchorcolor=blue,citecolor=blue]{hyperref}
\usepackage[dvips]{graphicx}
\usepackage[dvips]{graphics}
\usepackage[dvips]{epsfig}
\usepackage{theorem}
\usepackage{authblk}
\usepackage{lipsum}
\usepackage[normalem]{ulem} 
\usepackage[dvipsnames]{xcolor}

\newtheorem{theorem}{Theorem}[section]
\newtheorem{corollary}[theorem]{Corollary}
\newtheorem{proposition}[theorem]{Proposition}
\theorembodyfont{\upshape}
\newtheorem{definition}[theorem]{Definition}

\theorembodyfont{\upshape}

\newtheorem{remark}[theorem]{Remark}

\def\bproposition{\begin{proposition}}\def\eproposition{\end{proposition}}
\def\beqlb{\begin{eqnarray}}\def\eeqlb{\end{eqnarray}}
\def\beqnn{\begin{eqnarray*}}\def\eeqnn{\end{eqnarray*}}

 \def\qed{\hfill$\Box$\medskip}

\begin{document}

\title{Closed-End Formula for options linked to Target Volatility Strategies} 
\author[a]{Luca DI PERSIO}
\author[b,c]{Luca PREZIOSO}
\author[d]{Kai WALLBAUM}
\affil[a]{Department of Computer Science, University of Verona, Strada le Grazie, 15, Verona, Italy}
\affil[b]{Department of Mathematics, University of Trento, via Sommarive, 14, Trento, Italy}
\affil[c]{LPSM, University of Paris Diderot, 5 Rue Thomas Mann, Paris, France}
\affil[d]{RiskLab, Allianz Global Investors, Seidlstrasse 24-24a, Munchen, Germany}

\maketitle

\begin{description}
\item[Abstract.] Recent years have seen an emerging class of structured financial products based on options linked to dynamic asset allocation strategies. One of the most chosen approach is the so-called target volatility mechanism. It shifts between risky and riskless assets to control the volatility of the overall portfolio. Even if a series of articles have been already devoted to the analysis of options linked to the target volatility mechanism, this paper is the first, to the best of our knowledge, that tries to develop closed-end formulas for VolTarget options. In particular, we develop closed-end formulas for option prices and some key hedging parameters within a Black and Scholes setting, assuming the underlying follows a target volatility mechanism.
\item[Key words.] Volatility target portfolio, generalized Black-Scholes model, European options, exact formulas, Greeks, Euler-Maruyama scheme, Milstein scheme.
\item[AMS subject classification.] 91G10, 91G20, 91G50.
\end{description}

\section{Introduction}

In the aftermath of the financial markets, risk management solutions became more and more important for institutional and retail investors. The low interest rate environment forced practitioners to think of more efficient techniques how to use the available risk budgets in client portfolios. One of the most successful strategies, which were introduced in multi asset portfolios but also within structured products offering at least partial capital protection is the so-called target volatility strategy, also known as {\it VolTarget strategy} (VTS). This concept shifts dynamically between risk-free and risky assets in order to generate a portfolio with a stable risk level independent of market volatilities. The approach assumes that market volatilities are a good indicator for asset allocation decisions and the concept works well in rising markets with low volatility and in falling markets with higher volatilities. Practitioner often compare the concept with a constant portfolio protection insurance (CPPI) strategy, which also allocated dynamically between risky and riskless assets, but in this concept the investment process aims to achieve capital protection in general.
 
In recent years, dynamic asset allocation process like VTS or CPPI Strategies have been used as underlying of options and we saw a series of academic papers looking into option theory when the underlying of the derivative follows a certain trading rule shifting between risky and riskless assets. Most of these papers took a numeric approach to determine option prices or hedging parameters. We refer, e.g. to Albeverio et al. \cite{ASW13,ASW17}, Jawaid \cite{Jawaid15,Jawaid16}, Zakamulin \cite{Zakamulin14}, who were especially looking at VTS in different market models. Zagst et al. \cite{EKSZ11} focused on option on a CPPI and they also developed a closed-end formula of CPPI options in a Black-Scholes environment. 
 
This paper is the first attempt, to the best of ur knowledge, which considers closed-end formulas for VTS-linked options. 
Our underlying environment can be compared to the one by Zagst et al. in \cite{EKSZ11}, where the authors assume the risky asset to evolve as a Black-Scholes model.
We extend such an analysis  considering a generalized geometric Brownian motion framework, with random drift and diffusion, adapted to the real-world filtration of the probability space, then deriving  a closed-end formula for call and put options linked to VTS portfolios. We also consider the model's {\it Greeks}, providing closed-end expressions for key hedging parameters of options linked to VTSs. We would llike to stress that our results constitute an important step for any practitioner, who is pricing and hedging options linked to VTS portfolios.
 
The paper is organized as follows: in section \ref{CoV:1} we analyzed VTS portfolios in the case that the risky-asset dynamics are described by a generalized geometric Brownian motion. 
We treated the evaluation problem for options that have, as underlying, VTS portfolios determined by standard VTS, preserving fixed volatility in time. In section \ref{s:lev} we considered a  modification of the VTS, which is placing an upper bound to the leverage effect caused by the continuous dynamic adjustments of the VTS. In this case we considered a risky asset evolving as a geometric Brownian motion with time-dependent drift and volatility. For both strategies exact formulas for the price of call and put options are presented.
In section \ref{greeks} we analyzed the sensibility of the prices of options written on VTS portfolios with respect to volatility and risky asset value. We gave emphasis on the analysis of the Greeks (Vega, Delta and Gamma) with respect to changes in the underlying volatility, also providing several graphs to better highlight both the robustness and soundness  of our results.
In section \ref{numerics} we present relevant simulations for the  paths of a VTS portfolio, by exploiting both Euler-Maruyama and Milstein discretization approach, when the risky asset is described by a Heston model.


\section{Standard VolTarget strategy in a generalized Geometric Brownian motion environment}\label{CoV:1}
Let us start by considering a framework similar to the pioneering paper by Merton \cite{Merton71}. This means that through the paper we are going to consider a market in which two investment opportunities occurs: a {\it risk-free asset}, also referred as money market or Government bond or simply bond, and a {\it risky underlying asset}, also called stock or share. Moreover we assume that the randomness of the underlying asset is described by Black-Scholes-Merton stochastic differential equations and that there exist continuously-trading perfect markets where the agents are not subjected to any transaction costs to trade the risky asset for the riskless asset and vice versa.

\bigskip

Let $(\Omega,\mathcal{F},\{\mathcal{F}(t)\}_{t\ge0},\mathbb{P})$ be a filtered, complete probability space, with right-continuous filtration,
supporting a Brownian motion $W$, and consider a market consisting of two investment opportunities: a risky asset $\{S(t)\}_{t\ge0}$, and a riskless asset $\{B(t)\}_{t\ge0}$, evolving as a stochastic process satisfying the generalized geometric Brownian motion and a deterministic function:
\begin{align}
\mathrm{d}S(t)&=S(t)\,\bigl(\mu(t)\,\mathrm{d}t+\sigma(t)\,\mathrm{d}W(t)\bigr),\label{risky-asset}\\
\mathrm{d}B(t)&=r\,B(t)\,\mathrm{d}t,\nonumber
\end{align}
for each $t\ge t_0$, where $t_0\ge0$ is the starting time, $W$ is a Brownian motion $\mathcal{F}(t)$-adapted, $r\in\mathbb{R}^+$ is a positive constant representing the risk-free rate, and $\mu$ and $\sigma$ are stochastic processes adapted to $\{\mathcal{F}^W\}$, the natural filtration generated by the Brownian motion, and represent the mean rate of return and the percentage volatility of the risky asset respectively. Let $s,b\in\mathbb{R}_+$ be the values at time $t_0$ for the risky and riskless assets.

Moreover, consider an investor holding a portfolio, starting with a positive position $x$ invested in the riskless asset and a positive position $y$ invested in the risky asset, and assume that he is able to 
 transfer its capitals from an investment to another without paying any transaction costs. Therefore let us introduce the processes $L$ representing the cumulative amount of riskless asset sold in order to buy risky asset, and $M$ the process representing the cumulative amount of risky asset sold in order to buy riskless asset. Both $L$ and $M$ are assumed to be non-negative, non-decreasing and c\`adl\`ag.\\
Finally, the portfolio value can be represented continuously in time by the couple $(X(t),Y(t))_{t\ge t_0}$, starting at $X(t_0)=x$, $Y(t_0)=y$, with $(X,Y)$ representing the amount of capital invested in the riskless asset and in the risky asset, respectively, and evolving according to the following stochastic differential equations 
\begin{align}
\mathrm{d}X(t)&=r\,X(t)\,\mathrm{d}t+\mathrm{d}M(t)-\mathrm{d}L(t),\label{X}\\
\mathrm{d}Y(t)&=\mu(t)\,Y(t)\,\mathrm{d}t+\sigma(t)\,Y(t)\,\mathrm{d}W(t)+\mathrm{d}L(t)-\mathrm{d}M(t).\label{Y}
\end{align}
In what follows we are not introducing neither proportional or fixed transaction costs, leaving these framework for a future study. Let us just outline that, if costs are taken into account, then continuous re-balancing would cause non-negligible expenses to the investor in the portfolio aiming to preserve a fixed volatility which implies to consider a volatility target interval, instead of a punctual volatility target.

let us denote the total portfolio value of the investor at time $t>t_0$, by $V(t)$, and let $\alpha(t)$ denotes the percentage of  portfolio invested at the same time in the risky-asset assumed to be an adapted predictable c\`adl\`ag processe, while $1-\alpha(t)$ will denote the portfolio weight invested in the riskless-asset, namely
we define $V(t)=X(t)+Y(t)$, while $\alpha(t)=\frac{Y(t)}{X(t)+Y(t)}$.
Since the investments evolve according to \eqref{X} and \eqref{Y}, by substituting the risky asset dynamics \eqref{risky-asset}, we derive the dynamics of the portfolio value process:
\begin{equation}\label{wealth}
\begin{cases}
\mathrm{d}V(t)=V(t)\bigl\{\bigl(\alpha(t)\,(\mu(t)-r)+r\bigr)\,\mathrm{d}t+\alpha(t)\,\sigma(t)\,\mathrm{d}W(t)\bigr\},\quad t>t_0,\\
V(0)=x+y=:v,
\end{cases}
\end{equation}
where $\alpha$ is controlled by the investor and adapted to the filtration $\mathcal{F}$. We make two remarks: first of all notice that the portfolio whose value is determined by \eqref{wealth} is self-financing, i.e. the dynamics of \eqref{wealth} are equivalent to
\begin{equation}\label{sf:VTp}
\mathrm{d}V(t)=V(t)\left(\alpha(t)\,\frac{\mathrm{d}S(t)}{S(t)}+(1-\alpha(t))\,\frac{\mathrm{d}B(t)}{B(t)}\right),
\end{equation}
moreover notice that $V$ is a Markovian portfolio and that a priori we do not know the future value of the wealth process since it has a random dynamic.

Turning back to the Volatility Target (VT) investment strategy, let us recall that it is a dynamic asset allocation mechanism, where the amount invested in the risky asset is determined by a pre-defined volatility target level, denoted by $\widehat{\sigma}$ representing the volatility of the underlying risky asset, $\sigma(t)$, see \eqref{risky-asset}. By  dynamically shifting between the two investment opportunities, which evolve accordingly to equations \eqref{risky-asset}, the investor aims at preserving a constant volatility level of the resulting portfolio $V_{\widehat{\sigma}}(t)$, which can be used as underlying for derivatives, e.g., for European call/put options. We resume this notion in the following definition:

\begin{definition}[VolTarget strategy portfolio]\label{D:VT}
Consider the stochastic process $V_{\widehat{\sigma}}$ evolving according to
\begin{equation}\label{Valpha}
V_{\widehat{\sigma}}(t)=X_{1-\widehat{\alpha}}(t)+Y_{\widehat{\alpha}}(t),\quad t\ge t_0,
\end{equation}
where $Y_{\widehat{\alpha}}=\widehat{\alpha}\,V_{\widehat{\sigma}}$ and $X_{\widehat{\alpha}}=(1-\widehat{\alpha})\,V_{\widehat{\sigma}}$, and by $\widehat{\alpha}$ we meant the proportion of portfolio value $\alpha(t)$ dynamically invested in the risky asset. 
We say that $V_{\widehat{\sigma}}$ is a {\it VolTarget strategy portfolio} if it is self-financing and the weight process is  preserving a constant volatility equal to $\widehat{\sigma}$, where $X$ and $Y$ represent the amount of capital invested in the riskless and risky asset and evolve according to \eqref{X} and \eqref{Y}, respectively. 
\end{definition}

We want to determine explicitly the equation for the control which preserves a fixed volatility to the portfolio process \eqref{wealth}.
\begin{proposition}\label{Prop:VT}
For $\widehat{\alpha}(t)=\widehat{\sigma}/\sigma(t)$ we have that the process whose dynamics are given by \eqref{Valpha} is a VTS portfolio.
\end{proposition}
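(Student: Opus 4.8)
The plan is to substitute the candidate weight $\widehat{\alpha}(t)=\widehat{\sigma}/\sigma(t)$ directly into the portfolio dynamics \eqref{wealth} and check that the resulting diffusion coefficient is the constant $\widehat{\sigma}$. First I would recall that, by Definition~\ref{D:VT}, the VolTarget property amounts to two requirements: that the portfolio be self-financing, and that the instantaneous volatility of $V_{\widehat{\sigma}}$ equal $\widehat{\sigma}$ for all $t\ge t_0$. Since \eqref{wealth} was already shown to be equivalent to the self-financing relation \eqref{sf:VTp} for any adapted weight process $\alpha$, the self-financing part is automatic once we verify that $\widehat{\alpha}(t)=\widehat{\sigma}/\sigma(t)$ is an admissible (adapted, predictable, c\`adl\`ag) control; this holds because $\sigma$ is adapted to $\{\mathcal{F}^W\}$ and, implicitly, bounded away from zero so that $1/\sigma(t)$ is well defined.

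Next I would plug $\alpha(t)=\widehat{\sigma}/\sigma(t)$ into \eqref{wealth}. The diffusion term becomes $V_{\widehat{\sigma}}(t)\,\alpha(t)\,\sigma(t)\,\mathrm{d}W(t)=V_{\widehat{\sigma}}(t)\,\bigl(\widehat{\sigma}/\sigma(t)\bigr)\,\sigma(t)\,\mathrm{d}W(t)=V_{\widehat{\sigma}}(t)\,\widehat{\sigma}\,\mathrm{d}W(t)$, so the percentage volatility of the portfolio is exactly the constant $\widehat{\sigma}$, as required. For completeness I would also write out the drift term, which becomes $V_{\widehat{\sigma}}(t)\bigl(\widehat{\sigma}\,(\mu(t)-r)/\sigma(t)+r\bigr)\,\mathrm{d}t$, and observe that the whole SDE for $V_{\widehat{\sigma}}$ is then a linear SDE with adapted coefficients admitting a (pathwise unique, strictly positive) strong solution given by the usual stochastic exponential. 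This confirms that the process defined through \eqref{Valpha} with this choice of $\widehat{\alpha}$ is well posed, self-financing, and has constant volatility $\widehat{\sigma}$, hence is a VTS portfolio in the sense of Definition~\ref{D:VT}.

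The argument is essentially a one-line substitution, so there is no serious analytic obstacle; the only point requiring a little care is the well-posedness/admissibility of the control, i.e.\ ensuring $\sigma(t)$ stays strictly positive (or at least is almost surely never zero and $1/\sigma$ satisfies the integrability needed for the SDE \eqref{wealth} to have a solution) so that $\widehat{\alpha}=\widehat{\sigma}/\sigma$ is a legitimate, finite, adapted weight process. I would state this as a standing assumption on $\sigma$. Beyond that, one should note the interpretation: when $\sigma(t)>\widehat{\sigma}$ the strategy deleverages ($\widehat{\alpha}<1$, more in the bond), and when $\sigma(t)<\widehat{\sigma}$ it leverages up ($\widehat{\alpha}>1$), which motivates the capped variant studied later in Section~\ref{s:lev}.
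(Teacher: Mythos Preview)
Your proposal is correct and follows exactly the paper's approach: substitute $\widehat{\alpha}(t)=\widehat{\sigma}/\sigma(t)$ into the portfolio dynamics \eqref{wealth}, observe that the diffusion coefficient collapses to the constant $\widehat{\sigma}$, and conclude via Definition~\ref{D:VT}. The paper's proof is in fact the one-line version of your argument, omitting your additional remarks on admissibility of $1/\sigma$ and well-posedness of the resulting SDE.
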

\begin{proof}
By \eqref{wealth}, for $\alpha(t)=\widehat{\alpha}(t)$, we have 
\begin{equation}\label{V:VT}
\mathrm{d}V_{\widehat{\sigma}}(t)=V_{\widehat{\sigma}}(t)\left(\left(\frac{\widehat{\sigma}}{\sigma(t)}\,(\mu(t)-r)+r\right)\,\mathrm{d}t+\widehat{\sigma}\,\mathrm{d}W(t)\right),
\end{equation}
i.e. by Definition \ref{D:VT} we reach our thesis.
\qed
\end{proof}

Notice that we are not considering the equation involving the underlying \eqref{risky-asset} and the amounts of capital that have to be invested in the risky and riskless asset respectively. 
In  Proposition \ref{Prop:VT} we saw that in order to obtain a VTS portfolio the investor has to keep this ratio inversely proportional to the actual value of the volatility rate of the risky asset, i.e. equal to $\widehat{\alpha}(t)=\widehat{\sigma}/\sigma(t)$, which is stochastic.

\subsection{Option pricing}
Let $\mathcal{X}=\Phi(V_T)$ be a {\it contingent claim} with maturity $T$ and with underlying portfolio $V$, where $\Phi$ is a {\it contract function}. 
In what follows we provide an arbitrage-free price $\Pi(t;\mathcal{X})$ for such a claim, sometimes also denoted as $\Pi(t;\Phi)$ or $\Pi(t)$.

\bigskip

Let us heuristically assume for the moment that there exists a function $F\in C^{1,2}([0,T]\times\mathbb{R}_+)$ such that
\[
\Pi(t)=F(t,S(t)),
\]
then by the Black-Scholes equation we would have absence of arbitrage if $F$ is solution to the following PDE
\begin{equation}\label{BS:eq}
\begin{cases}
\partial(t)F(t,s)+r\,s\,\partial(s)F(t,s)+\frac{1}{2}\,s^2\,\widehat{\sigma}^2\,\partial_{ss}^2F(t,s)-r\,F(t,s)&=0\\
F(T,s)&=\Phi(s),
\end{cases}
\end{equation}
for $t\in[0,T]$ and $s\in\mathbb{R_+}$.
In the next subsections we are going to remove the above mentioned heuristic assumption to derive  pricing formulas for contingent claims written on the VTS portfolio. Let us note that the associated PDE \eqref{BS:eq}, can be solved \`a la Feynman-Ka\v{c}, i.e.
\[
F(t,s)=e^{-r\,(T-t)}\,\mathbb{E}_{t,s}^{\mathbb{Q}}[N(S_T)],\quad\text{for }t\in[0,T], s>0,
\]
where we are considering the expectation with respect to the unique risk-neutral measure $\mathbb{Q}$ conditioned by $S(t)=s$, see, e.g., \cite[Ch. 14]{bjork} for what concerns the existence of a unique risk-neutral measure.

\subsubsection{Risk Neutral Valuation}
Let $\mathbb{Q}$
be the unique equivalent martingale measure 
, namely the unique measure under which $S(t)/B(t)$ is a local martingale, and let $W^\mathbb{Q}$ be a Brownian motion under $Q$. Then, by Girsanov theorem, we have that the risky asset process $S$ satisfies the following SDE
\begin{equation}\label{RA_Gir}
\mathrm{d}S(t)=S(t)\left(r\,\mathrm{d}t+\sigma(t)\,\mathrm{d}W^{\mathbb{Q}}(t)\right)\,.
\end{equation}
Since the underlying risky asset is governed by a geometric Brownian motion with dynamics given by equation \eqref{RA_Gir}, we can apply the It\^o-D\"oblin formula to $\log(S(t))$ obtaining 
\begin{equation*}
S(t)=S(0)\,\exp\left\{\int_{0}^t(r-\sigma(s)^2/2)\,\mathrm{d}s+\int_{0}^t\sigma(s)\,\mathrm{d}W^{\mathbb{Q}}(s)\right\}\,.
\end{equation*}
Since the volatility in equation \eqref{RA_Gir} is stochastic, we cannot say much about the distribution of $\log(S(t)/S(0))$, which would have been Gaussian, in the special case of deterministic volatility.

Let us consider a European call, resp. put,  option with payoff 
\begin{align}\label{payoff}
\Phi_{\text{call}}(V_T)=(V_T-K)_+,\\
\Phi_{\text{put}}(V_T)=(K-V_T)_+,\label{payoff:p}
\end{align}
$T\ge t_0$ being its maturity time, while  $K$ represents its strike price. Through the next propositions and corollaries we will determine the price at the starting time $t_0\ge0$ of these kinds of options within the VolTarget framework.

\begin{proposition}\label{CoV:Prop2}
Assuming that the risky asset dynamics follow a generalized geometric Brownian motion with random $\mathcal{F}^W$-adapted drift and volatility, see equation \eqref{risky-asset}, the price at time $t_0$ of a call option with payoff \eqref{payoff}, denoted as $\Phi_{\text{call}}$, linked to the VTS portfolio $V_{\widehat{\sigma}}(t)$, see equation \eqref{V:VT}, is given by the following explicit formula
\begin{equation}\label{call:VT}
\Pi(t_0,\Phi_{\text{call}}(V_{\widehat{\sigma}}(T)))=v\,N\left(d_1(t_0)\right)-K\,e^{-r\,(T-t_0)}\,N\left(d_2(t_0)\right),
\end{equation}
where we recall that the proportion $\widehat{\alpha}(t)$ of portfolio value invested in the risky-asset is as the one defined in Proposition \ref{Prop:VT}, $N$ is the cumulative distribution function for the standard normal distribution, $v=V_{\widehat{\sigma}}(t_0)$ is the starting value of  the portfolio and we defined the following parameters
\begin{align*}
&d_1(t_0)=\frac{-z_{\widehat{\sigma}}(t_0)+\widehat{\sigma}\,(T-t_0)}{\sqrt{T-t_0}},\\
&d_2(t_0)=-\frac{z_{\widehat{\sigma}}(t_0)}{\sqrt{T-t_0}},\\
&z_{\widehat{\sigma}}(t_0)=\frac{1}{\widehat{\sigma}}\log\left(\frac{K}{v}\right)+\left(\frac{\widehat{\sigma}}{2}-\frac{r}{\widehat{\sigma}}\right)(T-t_0).\\
\end{align*}
\end{proposition}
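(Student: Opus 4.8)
The plan is to bypass the heuristic PDE route of the previous subsection (where one would in any case have to replace the ansatz $\Pi(t)=F(t,S(t))$ by $\Pi(t)=F(t,V_{\widehat\sigma}(t))$, since the claim is written on the portfolio, not directly on $S$) and to argue by risk-neutral valuation. The key structural observation is that, although $S$ carries a stochastic drift $\mu$ and a stochastic volatility $\sigma$, the VTS portfolio $V_{\widehat\sigma}$ has \emph{constant} volatility $\widehat\sigma$ and drift $r$ under $\mathbb{Q}$; hence its terminal value is exactly lognormal and the price collapses to a Black--Scholes expression.

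First I would rewrite the self-financing dynamics \eqref{sf:VTp} of $V_{\widehat\sigma}$ under the risk-neutral measure $\mathbb{Q}$: substituting the $\mathbb{Q}$-dynamics \eqref{RA_Gir} of $S$ together with $\ddr B(t)=r\,B(t)\,\ddr t$, and using $\widehat\alpha(t)=\widehat\sigma/\sigma(t)$ from Proposition \ref{Prop:VT}, the factor $\sigma(t)$ cancels and one obtains
\[
\ddr V_{\widehat\sigma}(t)=V_{\widehat\sigma}(t)\bigl(r\,\ddr t+\widehat\sigma\,\ddr W^{\mathbb{Q}}(t)\bigr),\qquad t\ge t_0,\quad V_{\widehat\sigma}(t_0)=v.
\]
This is a geometric Brownian motion with constant coefficients, so applying the It\^o--D\"oblin formula to $\log V_{\widehat\sigma}$ gives
\[
V_{\widehat\sigma}(T)=v\,\exp\Bigl\{\bigl(r-\tfrac12\widehat\sigma^2\bigr)(T-t_0)+\widehat\sigma\,\bigl(W^{\mathbb{Q}}(T)-W^{\mathbb{Q}}(t_0)\bigr)\Bigr\},
\]
so that, conditionally on $\mathcal{F}(t_0)$, $\log\bigl(V_{\widehat\sigma}(T)/v\bigr)$ is Gaussian with mean $(r-\tfrac12\widehat\sigma^2)(T-t_0)$ and variance $\widehat\sigma^2(T-t_0)$.

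Then, since the market is complete with a unique equivalent martingale measure $\mathbb{Q}$ (see \cite[Ch. 14]{bjork}), the arbitrage-free price is the discounted conditional $\mathbb{Q}$-expectation of the payoff \eqref{payoff},
\[
\Pi\bigl(t_0,\Phi_{\text{call}}(V_{\widehat\sigma}(T))\bigr)=e^{-r(T-t_0)}\,\mathbb{E}^{\mathbb{Q}}\bigl[(V_{\widehat\sigma}(T)-K)_+\,\big|\,\mathcal{F}(t_0)\bigr].
\]
Writing $W^{\mathbb{Q}}(T)-W^{\mathbb{Q}}(t_0)=\sqrt{T-t_0}\,Z$ with $Z\sim\mathcal N(0,1)$, the exercise event $\{V_{\widehat\sigma}(T)>K\}$ becomes $\{Z>z_{\widehat\sigma}(t_0)/\sqrt{T-t_0}\}=\{Z>-d_2(t_0)\}$, with $\mathbb{Q}$-probability $N(d_2(t_0))$; splitting the expectation into $\mathbb{E}^{\mathbb{Q}}[V_{\widehat\sigma}(T)\mathbbm{1}_{\{V_{\widehat\sigma}(T)>K\}}]$ and $-K\,\mathbb{Q}(V_{\widehat\sigma}(T)>K)$, and completing the square (equivalently, a Girsanov change of numeraire) in the first term, yields $v\,e^{r(T-t_0)}N(d_1(t_0))$. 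Discounting and collecting terms gives exactly \eqref{call:VT}. I do not expect a genuine obstacle: everything reduces to the classical Black--Scholes integral once the cancellation producing constant volatility is noticed. The only points requiring care are the justification of the risk-neutral pricing formula (completeness and uniqueness of $\mathbb{Q}$, already cited), the fact that the self-financing identity \eqref{sf:VTp} is preserved under the measure change, and the elementary bookkeeping that matches the resulting constants to the stated $d_1$, $d_2$ and $z_{\widehat\sigma}$.
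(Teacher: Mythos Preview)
Your proposal is correct and follows essentially the same route as the paper: both observe that under $\mathbb{Q}$ the VTS portfolio satisfies $\ddr V_{\widehat\sigma}=V_{\widehat\sigma}(r\,\ddr t+\widehat\sigma\,\ddr W^{\mathbb{Q}})$, solve it explicitly to get a lognormal terminal value, identify the exercise region via $z_{\widehat\sigma}(t_0)$, and split the discounted expectation into the two Gaussian integrals that yield \eqref{call:VT}. The only cosmetic difference is that you standardize to $Z\sim\mathcal N(0,1)$ whereas the paper integrates directly against the density $f_{N(0,T-t_0)}$ of the Brownian increment.
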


\begin{proof}
Notice that while the underlying risky asset has non constant volatility, see eq. \eqref{V:VT}, the dynamics for the VTS portfolio are simpler. In fact, we can easily obtain the following explicit solution
\[
V_{\widehat{\sigma}}(t)=v\,\exp\left(\left(r-\frac{\widehat{\sigma}^2}{2}\right)(t-t_0)+\widehat{\sigma}\,W^{\mathbb{Q}}(t-t_0)\right),\quad\text{for }t\ge t_0\,.
\]
Therefore, we have that $V_{\widehat{\sigma}}(T)>K$ iff 
\[
W^{\mathbb{Q}}(T-t_0)>\frac{1}{\widehat{\sigma}}\log\left(\frac{K}{v}\right)+\left(\frac{\widehat{\sigma}}{2}-\frac{r}{\widehat{\sigma}}\right)(T-t_0)=:z_{\widehat{\sigma}}(t_0)\,.
\]
Denoting by $f_{N(0,t)}(x)$ the probability density function of the Gaussian random variable havig mean $0$ and variance $t$, i.e. :
\[
f_{N(0,t)}(x)=\frac{1}{\sqrt{2\pi\,t}}\,e^{-\frac{x^2}{2t}}\,,
\]
and by $N(x)$ the cumulative distribution function of a standard Gaussian random variable, then we have
that the price of the call option on the portfolio value at time $t_0$ equals
\begin{align*}
\Pi(t_0,\Phi_{\text{call}}(V_{\widehat{\sigma}}(T)))&=\mathbb{E}\left[e^{-r\,(T-t_0)}(V_{\widehat{\sigma}}(T)-K)_+\Big|\mathcal{F}_{t_0}\right]\\
&=e^{-r\,(T-t_0)}\int_{z_{\widehat{\sigma}}(t_0)}^{+\infty}\left\{v\,\exp\left(\left(r-\frac{\widehat{\sigma}^2}{2}\right)(T-t_0)+\widehat{\sigma}\,x\right)-K\right\}\,f_{N(0,T-t_0)}(x)\,\mathrm{d}x\\
&=e^{-r\,(T-t_0)}\,v\,e^{\left(r-\frac{\widehat{\sigma}^2}{2}\right)(T-t_0)+\widehat{\sigma}^2/2\,(T-t_0)}\left(1-N\left(\frac{z_{\widehat{\sigma}}(t_0)-\widehat{\sigma}\,(T-t_0)}{\sqrt{T-t_0}}\right)\right)\\
&\qquad-K\,e^{-r\,(T-t_0)}\left(1-N\left(\frac{z_{\widehat{\sigma}}(t_0)}{\sqrt{T-t_0}}\right)\right)\\
&=v\,N\left(\frac{-z_{\widehat{\sigma}}(t_0)+\widehat{\sigma}\,(T-t_0)}{\sqrt{T-t_0}}\right)-K\,e^{-r\,(T-t_0)}\,N\left(-\frac{z_{\widehat{\sigma}}(t_0)}{\sqrt{T-t_0}}\right).
\end{align*}
\qed
\end{proof}

\begin{corollary}\label{CoV:Cor}
Assuming that the risky asset dynamics follow a generalized geometric Brownian motion with random $\mathcal{F}^W$-adapted drift and volatility, see equation \eqref{risky-asset}, the price at time $t_0$ of a put option with payoff \eqref{payoff:p}, denoted as $\Phi_{\text{put}}$, linked to the VTS portfolio $V_{\widehat{\sigma}}(t)$, see equation \eqref{V:VT}, is given by the following explicit formula
\begin{equation}
\Pi(t_0,\Phi_{\text{put}}(V_{\widehat{\sigma}}(T)))=K\,e^{-r\,(T-t_0)}\,N\left(-d_2(t_0)\right)-v\,N\left(-d_1(t_0)\right)\,,\label{put:VT}
\end{equation}
where the parameters $d_1$, $d_2$ and $z_{\widetilde{\sigma}}$ are defined as in Proposition \ref{CoV:Prop2}.
\end{corollary}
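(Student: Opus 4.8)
The plan is to avoid repeating the integration carried out in the proof of Proposition \ref{CoV:Prop2} and instead deduce the put price from the call price via put--call parity. The only structural fact needed is that the discounted VolTarget portfolio is a $\mathbb{Q}$-martingale, which is immediate from the explicit representation
\[
V_{\widehat{\sigma}}(t)=v\,\exp\left(\left(r-\frac{\widehat{\sigma}^2}{2}\right)(t-t_0)+\widehat{\sigma}\,W^{\mathbb{Q}}(t-t_0)\right)
\]
established in that proof: this is a geometric Brownian motion with drift $r$, so $e^{-r(t-t_0)}V_{\widehat{\sigma}}(t)$ is an exponential martingale under $\mathbb{Q}$, whence $\mathbb{E}^{\mathbb{Q}}\left[e^{-r(T-t_0)}V_{\widehat{\sigma}}(T)\mid\mathcal{F}_{t_0}\right]=v$.

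First I would record the pointwise identity $(V_{\widehat{\sigma}}(T)-K)_+-(K-V_{\widehat{\sigma}}(T))_+=V_{\widehat{\sigma}}(T)-K$. Taking discounted conditional expectations under $\mathbb{Q}$ and using linearity together with the martingale property above gives
\[
\Pi(t_0,\Phi_{\text{call}}(V_{\widehat{\sigma}}(T)))-\Pi(t_0,\Phi_{\text{put}}(V_{\widehat{\sigma}}(T)))=v-K\,e^{-r(T-t_0)}.
\]
Then I would substitute the closed-form call price \eqref{call:VT} and solve for the put price, obtaining
\[
\Pi(t_0,\Phi_{\text{put}}(V_{\widehat{\sigma}}(T)))=K\,e^{-r(T-t_0)}\bigl(1-N(d_2(t_0))\bigr)-v\,\bigl(1-N(d_1(t_0))\bigr),
\]
and finally rewrite this using the symmetry $1-N(x)=N(-x)$ of the standard normal c.d.f.\ to reach exactly \eqref{put:VT}.

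As an alternative, one could mimic the proof of Proposition \ref{CoV:Prop2} verbatim, noting that $V_{\widehat{\sigma}}(T)<K$ iff $W^{\mathbb{Q}}(T-t_0)<z_{\widehat{\sigma}}(t_0)$ and integrating $(K-V_{\widehat{\sigma}}(T))_+$ against the density $f_{N(0,T-t_0)}$ over $(-\infty,z_{\widehat{\sigma}}(t_0))$; the two Gaussian integrals are the same ones appearing there, now taken over the complementary half-line. Either way there is no real obstacle: the only point requiring a word of care is the justification of the martingale property (equivalently, that the call and put prices are genuine $\mathbb{Q}$-expectations and not merely solutions of the PDE \eqref{BS:eq}), and this has effectively already been settled in the previous proof.
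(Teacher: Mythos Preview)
Your approach is correct and essentially identical to the paper's own proof: the paper also invokes put--call parity to obtain $\Pi(t_0,\Phi_{\text{call}})-\Pi(t_0,\Phi_{\text{put}})=v-K\,e^{-r(T-t_0)}$ and then applies the symmetry $N(-x)=1-N(x)$ to arrive at \eqref{put:VT}. The only difference is that you justify the parity identity explicitly via the $\mathbb{Q}$-martingale property of the discounted VolTarget portfolio, whereas the paper simply cites it.
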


\begin{proof}
By the {\it put-call parity} formula, see, e.g., \cite[4.5.6]{Shreve}, we have that the difference between the price of a call option and the price of put option with same strike price, time to expiration and underlying, equals the difference between the actual price of the underlying, represented by the VT portfolio in our setting,  and the discounted strike price, namely
\[
\Pi(t_0,\Phi_{\text{call}}(V_{\widehat{\sigma}}(T)))-\Pi(t_0,\Phi_{\text{put}}(V_{\widehat{\sigma}}(T)))=v-K\,e^{-r\,(T-t_0)}\,, 
\]
therefore, we have \eqref{put:VT}, since $N(-x)=1-N(x)$ for each $x\in\mathbb{R}$.  
\qed
\end{proof}

\section{VolTarget Strategy with maximum allowed Leverage Factor}\label{s:lev}
In what follows we are going to study
 a more interesting strategy from the {\it practitioner} point of view. In particular, we introduce a parameter $L\ge 1$ determining the maximum allowed leverage of the portfolio, i.e. we force the weight process to be less or equal than the parameter $L$:
\begin{equation}\label{alpha:L:t}
\widetilde{\alpha}(t):=\min\{L;\widehat{\sigma}/\sigma(t)\}.
\end{equation}
From now on, we will distinguish the notations for standard VTSs 
by the one for VTSs with maximum allowed leverage factor (MLVTS, in short), by marking the volatility and weight symbols with an hat and a tilde, respectively.
In particular, while $\widehat{\sigma}$ and $\widehat{\alpha}$ refere to standard VTS portfolios, $\widetilde{\sigma}$ and $\widetilde{\alpha}$ are referred to MLVTS portfolios.

This limitation is imposed in order to prohibit VTSs that finance by loans a large
portion of the risky investment. The typical setup occurring
within real world scenarios is $L = 2$, see \cite{ASW17}, for further details.

The next proposition gives an analytical expression for the value of a European call option written on the MLVTS portfolio, with limited leverage, and time dependent volatility. In particular, we are considering a particular case of equation \eqref{risky-asset}: 
\begin{equation}\label{risky-asset-t}
\mathrm{d}S(t)=S(t)\bigl(\mu(t)\,\mathrm{d}t+\sigma(t)\,\mathrm{d}W(t)\bigr)\,,
\end{equation}
where $\mu,\sigma:\mathbb{R^+}\rightarrow\mathbb{R}^+$ are deterministic functions of time, allowing, instead, the percentage drift term to be  stochastic and $\mathcal{F}^W$-adapted.

\begin{proposition}\label{Prop:C:VTl}
Assuming that the risky asset dynamics follows a geometric Brownian motion with time-dependent drift and volatility, see equation \eqref{risky-asset-t}, the price at time $t_0$ of a call option with payoff \eqref{payoff}, denoted as $\Phi_{\text{call}}$, linked to the MLVTS portfolio $V_{\widetilde{\sigma}}(t)$, is given by the following explicit formula
\begin{equation}\label{call:VTl}
\Pi\left(t_0,\Phi_{\text{call}}(V_{\widetilde{\sigma}}(T))\right)=v\,N\left(\widetilde{d}_1(t_0)\right)-K\,e^{-r\,(T-t_0)}\,N\left(\widetilde{d}_2(t_0)\right)
\end{equation}
where the proportion of portfolio value invested in the risky-asset is $\widetilde{\alpha}(t):=\min\{L;\widehat{\sigma}/\sigma(t)\}$, $N$ is the cumulative distribution function for the standard normal distribution, $v=V_{\widehat{\sigma}}(t_0)$ is the starting value of  the portfolio and we defined the following parameters
\begin{align*}
&\widetilde{d}_1(t_0)=\frac{-\widetilde{z}_{\widehat{\sigma}}(t_0)+\varsigma(t_0)}{\sqrt{\varsigma(t_0)}},\\
&\widetilde{d}_2(t_0)=-\frac{\widetilde{z}_{\widehat{\sigma}}(t_0)}{\sqrt{\varsigma(t_0)}},\\
&z_{\widetilde{\sigma}}(t_0)=\log\left(\frac{K}{v}\right)-r\,(T-t_0)+\frac{\varsigma(t_0)}{2},\\
&\varsigma(t_0)=\int_{t_0}^T\widetilde{\sigma}(s)^2\,\mathrm{d}s,\\
&\widetilde{\sigma}(t)=\min\{L\,\sigma(t),\widehat{\sigma}\}.
\end{align*}
\end{proposition}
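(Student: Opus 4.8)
The plan is to mimic the proof of Proposition \ref{CoV:Prop2}, the only new ingredient being that the constant volatility $\widehat{\sigma}$ is replaced by the deterministic, time-dependent effective volatility $\widetilde{\sigma}(t)=\min\{L\,\sigma(t),\widehat{\sigma}\}$. First I would write down the dynamics of the MLVTS portfolio: plugging $\alpha(t)=\widetilde{\alpha}(t)=\min\{L;\widehat{\sigma}/\sigma(t)\}$ into \eqref{wealth} and passing to the risk-neutral measure $\mathbb{Q}$ via Girsanov (exactly as in \eqref{RA_Gir}), one gets
\[
\mathrm{d}V_{\widetilde{\sigma}}(t)=V_{\widetilde{\sigma}}(t)\bigl(r\,\mathrm{d}t+\widetilde{\alpha}(t)\,\sigma(t)\,\mathrm{d}W^{\mathbb{Q}}(t)\bigr),
\]
and since $\widetilde{\alpha}(t)\,\sigma(t)=\min\{L\,\sigma(t),\widehat{\sigma}\}=\widetilde{\sigma}(t)$ by \eqref{alpha:L:t}, the diffusion coefficient is precisely the deterministic function $\widetilde{\sigma}(t)$. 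Applying the It\^o--D\"oblin formula to $\log V_{\widetilde{\sigma}}(t)$ then yields the closed form
\[
V_{\widetilde{\sigma}}(T)=v\,\exp\!\left(r\,(T-t_0)-\tfrac12\!\int_{t_0}^T\!\widetilde{\sigma}(s)^2\,\mathrm{d}s+\int_{t_0}^T\!\widetilde{\sigma}(s)\,\mathrm{d}W^{\mathbb{Q}}(s)\right).
\]

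Next I would observe that, because $\widetilde{\sigma}$ is deterministic, the stochastic integral $\int_{t_0}^T\widetilde{\sigma}(s)\,\mathrm{d}W^{\mathbb{Q}}(s)$ is a centered Gaussian random variable with variance $\varsigma(t_0)=\int_{t_0}^T\widetilde{\sigma}(s)^2\,\mathrm{d}s$ — this is the key point that makes the computation go through, and it is exactly the role played by $\widehat{\sigma}^2(T-t_0)$ in Proposition \ref{CoV:Prop2}. Hence $\log(V_{\widetilde{\sigma}}(T)/v)$ is Gaussian with mean $r(T-t_0)-\varsigma(t_0)/2$ and variance $\varsigma(t_0)$, so $V_{\widetilde{\sigma}}(T)>K$ if and only if the normalized Gaussian increment exceeds the threshold $\widetilde{z}_{\widehat{\sigma}}(t_0)=\log(K/v)-r(T-t_0)+\varsigma(t_0)/2$ (note this matches the stated $z_{\widetilde{\sigma}}$ up to the obvious relabelling of symbols in the proposition).

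Then I would evaluate the discounted expectation of the payoff exactly as in the proof of Proposition \ref{CoV:Prop2}: write
\[
\Pi\bigl(t_0,\Phi_{\text{call}}(V_{\widetilde{\sigma}}(T))\bigr)=e^{-r(T-t_0)}\,\mathbb{E}^{\mathbb{Q}}\!\left[(V_{\widetilde{\sigma}}(T)-K)_+\,\big|\,\mathcal{F}_{t_0}\right],
\]
substitute the log-normal expression for $V_{\widetilde{\sigma}}(T)$, integrate against the Gaussian density $f_{N(0,\varsigma(t_0))}$ over the region $\{V_{\widetilde{\sigma}}(T)>K\}$, complete the square in the first term (the factor $e^{-r(T-t_0)}v\,e^{r(T-t_0)-\varsigma(t_0)/2+\varsigma(t_0)/2}=v$ drops out cleanly), and use $N(-x)=1-N(x)$. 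This produces $v\,N(\widetilde{d}_1(t_0))-K\,e^{-r(T-t_0)}\,N(\widetilde{d}_2(t_0))$ with $\widetilde{d}_1,\widetilde{d}_2$ as stated.

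Honestly there is no serious obstacle here; the only points requiring a word of care are (i) justifying that $\widetilde{\sigma}(t)$ is bounded and measurable enough for $\int\widetilde{\sigma}\,\mathrm{d}W^{\mathbb{Q}}$ to be a well-defined Gaussian martingale and for Girsanov to apply — since $\sigma$ is a positive deterministic function and $\widetilde{\sigma}=\min\{L\sigma,\widehat{\sigma}\}\le\widehat{\sigma}$, this is immediate under the implicit integrability assumptions on $\sigma$; and (ii) keeping the bookkeeping straight between $\varsigma(t_0)$ and the role $(T-t_0)$ played in the constant-volatility case, i.e. checking that $\widehat{\sigma}\cdot\sqrt{T-t_0}$ gets replaced everywhere by $\sqrt{\varsigma(t_0)}$ and $\widehat{\sigma}^2(T-t_0)$ by $\varsigma(t_0)$. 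The put analogue would follow, if needed, by the same put--call parity argument as in Corollary \ref{CoV:Cor}.
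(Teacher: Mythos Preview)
Your proposal is correct and follows essentially the same route as the paper's own proof: derive the closed-form expression for $V_{\widetilde{\sigma}}(T)$ under $\mathbb{Q}$, recognize that $\int_{t_0}^T\widetilde{\sigma}(s)\,\mathrm{d}W^{\mathbb{Q}}(s)\sim N(0,\varsigma(t_0))$ since $\widetilde{\sigma}$ is deterministic, identify the threshold $z_{\widetilde{\sigma}}(t_0)$, and then integrate against the Gaussian density with the usual completion of the square. Your additional remarks on the boundedness of $\widetilde{\sigma}$ and on the bookkeeping $\widehat{\sigma}^2(T-t_0)\leadsto\varsigma(t_0)$ are helpful but not needed beyond what the paper does.
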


\begin{proof}
For this strategy we have that the portfolio value has not a constant volatility and it has the following expression
\[
V_{\widetilde{\sigma}}(t_0,t)=v\,\exp\left(r\,(t-t_0)-\varsigma(t_0)/2+\int_{t_0}^t\min(L\,\sigma(s),\widehat{\sigma})\,\mathrm{d}W^\mathbb{Q}(s)\right),
\]
and we have $\widetilde{W}(t-t_0):=\int_{t_0}^t\min(L\,\sigma(s),\widehat{\sigma})\,\mathrm{d}W^\mathbb{Q}(s)\sim N(0,\varsigma(t_0))$, which means that its probability density function is 
\[
f_{N(0,\varsigma(t_0))}(x)=\frac{1}{\sqrt{2\,\pi\,\varsigma(t_0)}}\exp\left(-\frac{x^2}{2\,\varsigma(t_0)}\right).
\]
Therefore we have that $V_{\widetilde{\sigma}}(T)>K$ iff
\[
\widetilde{W}(T-t_0)>\log\left(\frac{K}{v}\right)-r\,(T-t_0)+\frac{\varsigma(t_0)}{2}=:z_{\widetilde{\sigma}}(t_0),
\]
and have we have that the considered option value equals to
\begin{align*}
\Pi\left(t_0,\Phi_{\text{call}}(V_{\widetilde{\sigma}}(T))\right)&=\mathbb{E}\left[e^{-r\,(T-t_0)}(V_{\widetilde{\sigma}}(T)-K)_+\Big|\mathcal{F}_{t_0}\right]\\
&=e^{-r\,(T-t_0)}\int_{z_{\widetilde{\sigma}}}^{+\infty}\Bigl\{v\,\exp\Bigl(r\,(T-t_0)-\varsigma(t_0)/2+x\Bigr)-K\Bigr\}\,f_{N(0,\varsigma(t_0))}(x)\,\mathrm{d}x\\
&=e^{-r\,(T-t_0)}\,v\,e^{r\,(T-t_0)-\varsigma(t_0)/2+\varsigma(t_0)/2}\left(1-N\left(\frac{z_{\widetilde{\sigma}}(t_0)-\varsigma(t_0)}{\sqrt{\varsigma(t_0)}}\right)\right)\\
&\qquad-K\,e^{-r\,(T-t_0)}\left(1-N\left(\frac{z_{\widetilde{\sigma}}(t_0)}{\sqrt{\varsigma(t_0)}}\right)\right)\\
&=v\,N\left(\frac{-z_{\widetilde{\sigma}}(t_0)+\varsigma(t_0)}{\sqrt{\varsigma(t_0)}}\right)-K\,e^{-r\,(T-t_0)}\,N\left(-\frac{z_{\widetilde{\sigma}}(t_0)}{\sqrt{\varsigma(t_0)}}\right).
\end{align*}
\qed\\
\end{proof}
\begin{remark}
Notice that the price of this call option depends on the future volatility, but, since it is deterministic, it is not an issue, and indeed we have obtained exact formulas.
\end{remark}

\begin{corollary}
Assuming that the risky asset dynamic follows a  geometric Brownian motion with time-dependent drift and volatility, see equation \eqref{risky-asset-t}, the price at time $t_0$ of a put option with payoff \eqref{payoff:p}, denoted as $\Phi_{\text{put}}$, linked to the MLVTS portfolio $V_{\widetilde{\sigma}}(t)$, is given by the following explicit formula
\begin{equation}\label{put:VTl}
\Pi\left(t_0,\Phi_{\text{put}}(V_{\widetilde{\sigma}}(T))\right)=K\,e^{-r\,(T-t_0)}\,N\left(-\widetilde{d}_2(t_0)\right)-v\,N\left(-\widetilde{d}_1(t_0)\right)
\end{equation}
where $\widetilde{d}_1$ and $\widetilde{d}_2$ are defined as in Proposition \ref{Prop:C:VTl}, and the proportion of portfolio value invested in the risky asset is $\widetilde{\alpha}(t):=\min\{L;\widehat{\sigma}/\sigma(t)\}$.
\end{corollary}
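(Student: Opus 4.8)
The plan is to mirror exactly the argument used for Corollary \ref{CoV:Cor}, replacing the standard VTS portfolio $V_{\widehat{\sigma}}$ by the leverage-capped portfolio $V_{\widetilde{\sigma}}$ and invoking the call formula \eqref{call:VTl} from Proposition \ref{Prop:C:VTl} instead of \eqref{call:VT}. Since no further structural work is required, this will be a short proof.

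First I would note that the put-call parity relation holds in this setting for the same reason as before: the MLVTS portfolio $V_{\widetilde{\sigma}}$ is self-financing by Definition \ref{D:VT}, and the discounted portfolio $e^{-rt}V_{\widetilde{\sigma}}(t)$ is a $\mathbb{Q}$-martingale — this is visible directly from the explicit representation $V_{\widetilde{\sigma}}(t_0,t)=v\exp\bigl(r(t-t_0)-\varsigma(t_0)/2+\widetilde{W}(t-t_0)\bigr)$ obtained in the proof of Proposition \ref{Prop:C:VTl}, since $\widetilde{W}$ is a time-changed Brownian motion and the exponential is the corresponding stochastic exponential. Hence, taking discounted $\mathbb{Q}$-expectations of the identity $(V_{\widetilde{\sigma}}(T)-K)_+-(K-V_{\widetilde{\sigma}}(T))_+=V_{\widetilde{\sigma}}(T)-K$ conditioned on $\mathcal{F}_{t_0}$ yields
\[
\Pi\bigl(t_0,\Phi_{\text{call}}(V_{\widetilde{\sigma}}(T))\bigr)-\Pi\bigl(t_0,\Phi_{\text{put}}(V_{\widetilde{\sigma}}(T))\bigr)=v-K\,e^{-r(T-t_0)}.
\]

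Then I would substitute the call price \eqref{call:VTl} into this relation, solve for the put price, and use the symmetry $N(-x)=1-N(x)$ to rewrite $v\bigl(1-N(\widetilde{d}_1(t_0))\bigr)=v\,N(-\widetilde{d}_1(t_0))$ and $K e^{-r(T-t_0)}\bigl(1-N(\widetilde{d}_2(t_0))\bigr)=K e^{-r(T-t_0)}N(-\widetilde{d}_2(t_0))$, which gives precisely \eqref{put:VTl}. There is no real obstacle here; the only point worth a sentence is the justification of the martingale property above, so that put-call parity is legitimately available in the time-dependent-volatility, stochastic-drift case rather than merely quoted. Alternatively, one could give a fully self-contained derivation by integrating the put payoff against $f_{N(0,\varsigma(t_0))}$ exactly as in the proof of Proposition \ref{Prop:C:VTl}, but the parity route is cleaner and I would present that.
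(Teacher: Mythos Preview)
Your proposal is correct and follows the same route as the paper, which simply states that the result is a direct consequence of put-call parity. If anything, your version is more careful than the paper's one-line proof, since you explicitly justify the martingale property of the discounted MLVTS portfolio that underpins the parity relation.
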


\begin{proof}
Direct consequence of the put-call parity formula.\qed
\end{proof}

\section{Greeks}\label{greeks}

In this section we will move on the quantitative study of the prices of options on a VTS portfolio in continuous time. In particular, we will explicitly derive the associated {\it Greeks'} values, the latter being those 
quantities representing derivatives price sensitivness to their characterizing underlying parameters' changes in time.

\bigskip
In what follows we are going to consider a risky asset evolving  as in the Black-Scholes model. Therefore, the price formulas \eqref{call:VT}, \eqref{put:VT}, \eqref{call:VTl} and \eqref{put:VTl} for call and put options, with VTS and MLVTS underlying portfolios, reduce to
\begin{align*}
\Pi\left(t_0,\Phi_{\text{call}}(V_{\widehat{\sigma}}(T))\right)=&v\,N(d_1)-K\,e^{-r\,(T-t_0)}\,N(d_2),\\
\Pi\left(t_0,\Phi_{\text{call}}(V_{\widetilde{\sigma}}(T))\right)=&
\begin{cases}
v\,N(\widetilde{d}_1)-K\,e^{-r\,(T-t_0)}\,N(\widetilde{d}_2),\quad&\text{for }\sigma<\widehat{\sigma}/L,\\
v\,N(d_1)-K\,e^{-r\,(T-t_0)}\,N(d_2),&\text{for }\sigma>\widehat{\sigma}/L,
\end{cases}\\
\Pi\left(t_0,\Phi_{\text{put}}(V_{\widehat{\sigma}}(T))\right)=&K\,e^{-r\,(T-t_0)}\,N(-d_2)-v\,N(-d_1),\\
\Pi\left(t_0,\Phi_{\text{put}}(V_{\widetilde{\sigma}}(T))\right)=&
\begin{cases}
K\,e^{-r\,(T-t_0)}\,N(-\widetilde{d}_2)-v\,N(-\widetilde{d}_1),\quad&\text{for }\sigma<\widehat{\sigma}/L,\\
K\,e^{-r\,(T-t_0)}\,N(-d_2)-v\,N(-d_1),&\text{for }\sigma>\widehat{\sigma}/L,
\end{cases}
\end{align*}
where
\begin{align*}
d_1=&\frac{1}{\widehat{\sigma}\,\sqrt{T-t_0}}\,\left(\log(v/K)+(r+\widehat{\sigma}^2/2)\,(T-t_0)\right),\\
d_2=&\frac{1}{\widehat{\sigma}\,\sqrt{T-t_0}}\,\left(\log(v/K)+(r-\widehat{\sigma}^2/2)\,(T-t_0)\right),\\
\widetilde{d}_1=&\frac{1}{\sigma\,L\,\sqrt{T-t_0}}\,\left(\log(v/K)+(r+L^2\,\sigma^2/2)\,(T-t_0)\right),\\
\widetilde{d}_2=&\frac{1}{\sigma\,L\,\sqrt{T-t_0}}\,\left(\log(v/K)+(r-L^2\,\sigma^2/2)\,(T-t_0)\right).
\end{align*}
\subsection{Vega}\label{S:Vega}
Since VTS portfolios are meant to preserve a fixed volatility level, the most representative Greek value is the {\it Vega} one, since it represents the sensitivity of the option price to the  risky asset's volatility.
\begin{proposition}\label{Vega:C}
The Vega of a call and put option, with payoff \eqref{payoff} and \eqref{payoff:p}, on VTS and MLVTS portfolios with weight strategies $\widehat{\alpha}=\widehat{\sigma}/\sigma$ and $\widetilde{\alpha}:=\min\{L;\widehat{\sigma}/\sigma\}$ respectively, are respectively given by
\begin{equation*}
\nu_{\{\Phi_{\text{call}},V_{\widehat{\sigma}}\}}=\partial_\sigma \Pi\left(t_0,\Phi_{\text{call}}(V_{\widehat{\sigma}}(T))\right)=0,
\end{equation*}
\begin{equation}\label{vega:VTl}
\nu_{\{\Phi_{\text{call}},V_{\widetilde{\sigma}}\}}=\partial_\sigma \Pi\left(t_0,\Phi_{\text{call}}(V_{\widetilde{\sigma}}(T))\right)=
\begin{cases}
\frac{v}{\sqrt{2\pi}}\,\exp\left(-\frac{\widetilde{d}_1^2}{2}\right)L\,\sqrt{T-t_0},\quad &\text{for }\sigma<\frac{\widehat{\sigma}}{L},\\
0,&\text{for }\sigma>\frac{\widehat{\sigma}}{L},
\end{cases}
\end{equation}
with $\nu_{\{\Phi_{\text{put}},V_{\widehat{\sigma}}\}}=\nu_{\{\Phi_{\text{call}},V_{\widehat{\sigma}}\}}$ and $\nu_{\{\Phi_{\text{put}},V_{\widetilde{\sigma}}\}}=\nu_{\{\Phi_{\text{call}},V_{\widetilde{\sigma}}\}}$, where 
\begin{equation*}
\widetilde{d}_1=\frac{\log(v/K)+\left(r+\frac{L^2\,\sigma^2}{2}\right)(T-t_0)}{L\,\sigma\,\sqrt{T-t_0}}\,.
\end{equation*}
\end{proposition}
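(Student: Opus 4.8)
The plan is to observe that in each volatility regime the price formula reduces to a classical Black--Scholes price with a constant volatility, and then to differentiate in $\sigma$ using the standard Black--Scholes density identity $v\,\phi(d_1)=K\,e^{-r(T-t_0)}\,\phi(d_2)$, where $\phi(x)=\tfrac{1}{\sqrt{2\pi}}e^{-x^2/2}$ denotes the standard normal density.

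First I would dispose of the standard VTS case. In the price $v\,N(d_1)-K\,e^{-r(T-t_0)}\,N(d_2)$ the arguments $d_1,d_2$ depend only on $\widehat{\sigma}$, $r$, $v$, $K$ and $T-t_0$, and not on the risky asset's instantaneous volatility $\sigma$; hence $\partial_\sigma$ of the price is identically $0$, which gives $\nu_{\{\Phi_{\text{call}},V_{\widehat{\sigma}}\}}=0$. The same observation handles the MLVTS price in the regime $\sigma>\widehat{\sigma}/L$: there $\widehat{\sigma}/\sigma<L$, so $\widetilde{\alpha}=\widehat{\sigma}/\sigma$ and the effective portfolio volatility $\widetilde{\alpha}\,\sigma$ equals $\widehat{\sigma}$; the price coincides with the standard VTS price and its $\sigma$-derivative vanishes.

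Next I would treat the regime $\sigma<\widehat{\sigma}/L$, where $\widehat{\sigma}/\sigma>L$, so $\widetilde{\alpha}=L$ and the effective portfolio volatility equals $L\,\sigma$. Consequently the price is exactly a Black--Scholes call price with volatility parameter $L\,\sigma$, i.e.\ $v\,N(\widetilde{d}_1)-K\,e^{-r(T-t_0)}\,N(\widetilde{d}_2)$ with $\widetilde{d}_1,\widetilde{d}_2$ as in the statement. Differentiating in $\sigma$ produces $v\,\phi(\widetilde{d}_1)\,\partial_\sigma\widetilde{d}_1-K\,e^{-r(T-t_0)}\,\phi(\widetilde{d}_2)\,\partial_\sigma\widetilde{d}_2$; using $\widetilde{d}_1-\widetilde{d}_2=L\,\sigma\,\sqrt{T-t_0}$ together with $\widetilde{d}_1+\widetilde{d}_2=\bigl(2\log(v/K)+2r(T-t_0)\bigr)/(L\,\sigma\,\sqrt{T-t_0})$ one checks $\widetilde{d}_1^2-\widetilde{d}_2^2=2\log(v/K)+2r(T-t_0)$, hence $v\,\phi(\widetilde{d}_1)=K\,e^{-r(T-t_0)}\,\phi(\widetilde{d}_2)$. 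The two derivative terms therefore collapse to $v\,\phi(\widetilde{d}_1)\,\partial_\sigma(\widetilde{d}_1-\widetilde{d}_2)$, and since $\partial_\sigma(\widetilde{d}_1-\widetilde{d}_2)=L\,\sqrt{T-t_0}$ this equals $\tfrac{v}{\sqrt{2\pi}}\,\exp\!\bigl(-\widetilde{d}_1^2/2\bigr)\,L\,\sqrt{T-t_0}$, which is \eqref{vega:VTl}. Finally, the put statements follow from put--call parity: the put price equals the call price minus $v-K\,e^{-r(T-t_0)}$, a quantity independent of $\sigma$, so its $\sigma$-derivative equals the call Vega in every regime, giving $\nu_{\{\Phi_{\text{put}},V_{\widehat{\sigma}}\}}=\nu_{\{\Phi_{\text{call}},V_{\widehat{\sigma}}\}}$ and $\nu_{\{\Phi_{\text{put}},V_{\widetilde{\sigma}}\}}=\nu_{\{\Phi_{\text{call}},V_{\widetilde{\sigma}}\}}$. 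The main point requiring care — though still essentially routine — is the regime split for the MLVTS portfolio together with the verification of the density identity that makes the $\sigma$-derivatives collapse; I would also note that at the boundary $\sigma=\widehat{\sigma}/L$ the two branches agree in value but have different one-sided $\sigma$-derivatives, so Vega is defined only for $\sigma\ne\widehat{\sigma}/L$ and I would not attempt to assign it a value at the kink.
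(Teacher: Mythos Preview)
Your proof is correct and follows essentially the same route as the paper's: both reduce the MLVTS case $\sigma<\widehat{\sigma}/L$ to a Black--Scholes price with volatility $L\sigma$, differentiate, invoke the identity $v\,\phi(\widetilde d_1)=K\,e^{-r(T-t_0)}\phi(\widetilde d_2)$, and finish using $\widetilde d_1-\widetilde d_2=L\sigma\sqrt{T-t_0}$. Your organization is slightly cleaner (collapsing directly to $v\,\phi(\widetilde d_1)\,\partial_\sigma(\widetilde d_1-\widetilde d_2)$ rather than first expanding $\partial_\sigma\widetilde d_i$ individually), you make the $\sigma$-independence in the VTS and $\sigma>\widehat{\sigma}/L$ regimes explicit where the paper leaves this implicit, and you justify the put case via put--call parity where the paper merely says ``similar computations also work''; your remark on the non-differentiability at $\sigma=\widehat{\sigma}/L$ is a nice addition that the paper does not mention.
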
 

\begin{proof}
Let us consider the MLVTS for the case in which $\sigma<\frac{\widehat{\sigma}}{L}$, then we recall that the price of the call option simplifies to
\[
\Pi\left(t_0,\Phi_{\text{call}}(V_{\widehat{\sigma}}(T))\right)=v\,N(\widetilde{d}_1)-K\,\exp(-r\,(T-t))\,N(\widetilde{d}_2)\,,
\]
where 
\begin{align*}
\widetilde{d}_1&=\frac{-\log(K/v)+(r+L^2\,\sigma^2/2)\,(T-t)}{L\,\sigma\,\sqrt{T-t_0}},\\
\widetilde{d}_2&=\frac{-\log(K/v)+(r-L^2\,\sigma^2/2)\,(T-t)}{L\,\sigma\,\sqrt{T-t_0}}.
\end{align*}
Then, computing the partial derivative with respect to $\sigma$, we have
\begin{align*}
\partial_\sigma \Pi\left(t_0,\Phi_{\text{call}}(V_{\widetilde{\sigma}}(T))\right)&=
\frac{1}{\sqrt{2\pi}}\left(v\,e^{-\widetilde{d}_1^2/2}\left(L\,\sqrt{T-t_0}-\frac{\widetilde{d}_1}{\sigma}\right)+K\,e^{-\widetilde{d}_2^2/2-r\,(T-t_0)}\left(L\,\sqrt{T-t_0}+\frac{\widetilde{d}_2}{\sigma}\right)\right)\\
&=\frac{1}{\sqrt{2\pi}}\,v\,e^{-\widetilde{d}_1^2/2}\left(2\,L\,\sqrt{T-t_0}-\frac{\widetilde{d}_1-\widetilde{d}_2}{\sigma}\right)\\
&=\frac{v}{\sqrt{2\pi}}\,\exp\left(-\frac{(\widetilde{d}_1)^2}{2}\right)L\,\sqrt{T-t_0},
\end{align*}
where the second row steams from the identity $\exp(\frac{\widetilde{d}_2^2-\widetilde{d}_1^2}{2})=\frac{v}{K}\exp(r\,(T-t_0))$, and the last by $\widetilde{d}_1-\widetilde{d}_2=L\,\sigma\,\sqrt{T-t_0}$.

Similar computations also work for the Vega of put options.
\qed
\end{proof}

\begin{remark}
let us underline that if  $\sigma<\widehat{\sigma}/L$, then the MLVTS call and the standard leverage call share the same price, which can be expressed in terms of the standard call option price, denoted as $\Pi_S:=\Pi\left(t_0,\Phi_{\text{call}}(S(T))\right)$; hnce stating the dependence with respect the risky asset's volatility
\[
\Pi_{V_{\widetilde{\sigma}}}(\sigma)=\Pi_S(L\,\sigma),
\]
so that, computing the partial derivative with respect to the volatility, we have
\[
\nu_{\{\Phi_{\text{call}},V_{\widetilde{\sigma}}\}}=\partial_\sigma\Pi_{V_{\widetilde{\sigma}}}(\sigma)=L\,\partial_\sigma\Pi_{S}(L\,\sigma)\,,
\]
namely the same expression as in equation \eqref{vega:VTl}.
\end{remark}

In figure \ref{fig1} we provide a comparison between  the graphs of a Vega for a call option written on a portfolio adopting a MLVTS, and the Vega for a standard call option. The left graph represents Vegas  for at-the-money options. Here, one can notice that, while for volatilities higher than $\sigma>\frac{\widehat{\sigma}}{L}$, the MLVTS Vega is null, for {\it small} volatilities the MLVTS Vega is even higher than the Vega for standard call options. The right graph in figure \ref{fig1} represents the comparison taking into account the sensitivity of Vega with respect to the portfolio value. Notice that, while for  a standard call option the highest Vega is reached for the underlying share's value equal to
\[
v^*=s^*=K\,e^{-(T-t_0)\,(r-\frac{\sigma^2}{2})},
\] 
for the MLVTS call option, it is reached in
\[
v^*=K\,e^{-(T-t_0)\,(r-\frac{L^2\,\sigma^2}{2})}\,.
\]
Finally, in figure \ref{fig3}, we summarized the dependence of Vega with respect to both the volatility and the underlying portfolio value.


\begin{figure}[htbp]
\centering
\includegraphics[width=.48\textwidth]{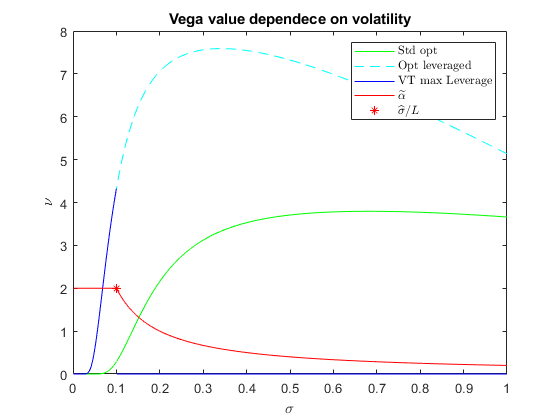}\,
\includegraphics[width=.48\textwidth]{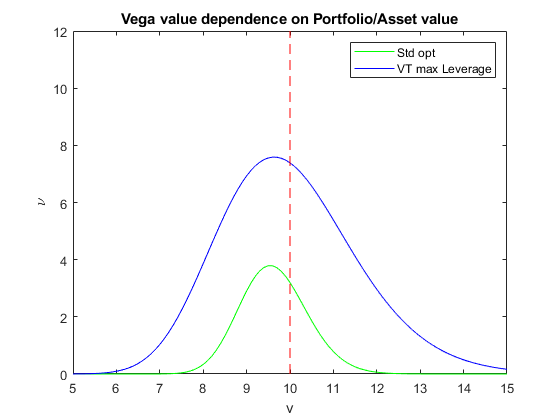}
\caption{The plots represent the behavior of  Vega values of an option written on the MLVTS portfolio, with maximum allowed leverage given by the weight strategy $\widetilde{\alpha}=\min(L,\widehat{\sigma}/\sigma)$, highlighting its dependence on the values of the volatility $\sigma$ (left) and on the values of $v$ (right). The parameters are set up as $r=5\%,v=12,K=10,t_0=0,T=1,\widehat{\sigma}=20\%,L=2$. 
For the volatility dependence (left), the MLVTS Vega line (in blue) is also compared with the dotted line of an hypothetical portfolio holding $L$ times its wealth in the risky asset (in cyan) and the Vega for a standard call option whose underlying is simply the risky asset. 
Instead, for the portfolio initial value dependence (right), we considered $\sigma=0.08$, i.e. $\sigma<\widehat{\sigma}/L$. The MLVTS Vega line (in blue) is the same as the Vega of an hypothetical portfolio holding $L$ times its wealth in the risky asset. This line is compared with the Vega of a portfolio investing all its capital in the risky asset. 
We remark  that we obtained a greater  Vega value for the MLVTS option  than the one for the standard option. This is due to the fact that we considered a relatively small volatility, namely less than $\widehat{\sigma}/L$). Instead, if we would have considered a volatility greater than $\widehat{\sigma}/L$, the Vega value for the option written on the MLVTS would have been identically zero.
}\label{fig1}
\end{figure}
\begin{figure}[htbp]
\centering
\includegraphics[width=.48\textwidth]{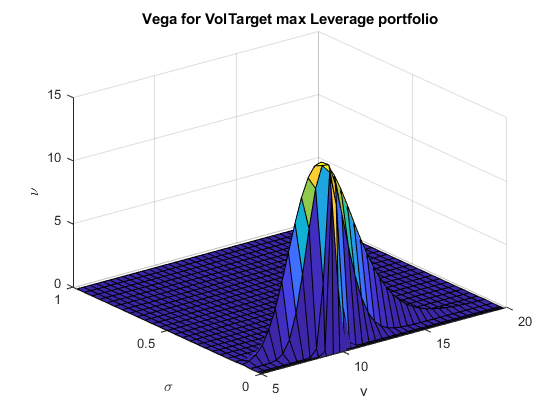}
\includegraphics[width=.48\textwidth]{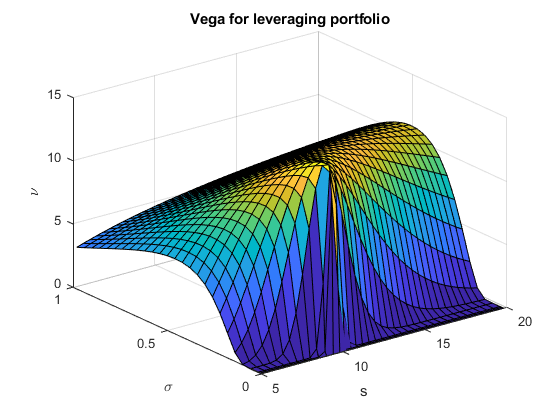}
\caption{The two surfaces are the Vega for a portfolio adopting a MLVTS (left figure) and the Vega for a portfolio investing $L$ times its wealth $v$ in the risky asset (right figure). One can notice that the MLVTS hedges well the portfolio against volatility variations when the volatility is high (higher than $\widetilde{\sigma}/L$).
The parameters are set as $r=5\%, t_0=0, T=1, \widehat{\sigma}=20\%, L=2, K=10$.
}\label{fig3}
\end{figure}

\subsection{Delta}\label{S:Delta}
Before dealing with the Delta of options written on VTS and MLVTS portfolios, it is worth to start analyzing the sensitivity of the VTS portfolio with respect to small changes in the risky asset price. To perform this, we write the VTS portfolio dynamics as
\begin{equation}\label{sf:VTp1}
\mathrm{d}V_{\widehat{\sigma}}(t)=\varphi_{\widehat{\sigma}}(t)\,\mathrm{d}S(t)+\psi_{\widehat{\sigma}}(t)\,\mathrm{d}B(t),
\end{equation}
where we defined $\varphi_S$ and $\varphi_B$ as the instantaneous number of shares and bonds held in the portfolio.
By the self-financing equation \eqref{sf:VTp}, we have that
\begin{align*}
\varphi_{\widehat{\sigma}}(t)&=\frac{V(t)\,\widehat{\alpha}(t)}{S(t)},\\
\psi_{\widehat{\sigma}}(t)&=\frac{V(t)\,(1-\widehat{\alpha}(t))}{B(t)},
\end{align*}
which means that the Delta of the VT portfolio is
\begin{equation}\label{Delta:V}
\Delta_{V_{\widehat{\sigma}}}=\frac{V(t)\,\widehat{\sigma}}{S(t)\,\sigma}.
\end{equation}

\begin{remark}\label{Rem:Ito}
Let us underline that the price of an option written on the VTS portfolio may be equivalently determined solely by the dynamics of  the VTS portfolio and the actual time, or by the risky asset dynamics, the bond dynamics and the actual time. Namely, we may denote the price of a generic option on a VTS portfolio as $\Pi(t,V)$ or $\Pi(t,S,B)$. 
Therefore, within the first setting, by equation \eqref{sf:VTp1} and It\^o-D\"oeblin formula, see \cite[Ch. 4]{Shreve}, we have
\begin{align}
\mathrm{d}\Pi(t,V)=&\partial_t\Pi(t,V)\,\mathrm{d}t+\partial_V\Pi(t,V)\,\mathrm{d}V_t+\frac{1}{2}\partial_{VV}^2\Pi(t,V)\,\mathrm{d}[V,V]_t\nonumber\\
=&\partial_t\Pi(t,V)\,\mathrm{d}t+\partial_V\Pi(t,V)\,\bigl(\varphi(t)\,\mathrm{d}S_t+\Psi(t)\,\mathrm{d}B_t\bigr)\nonumber\\
&\quad\quad\quad\quad\quad\quad\quad\quad\quad\quad\quad\quad+\frac{1}{2}\partial_{VV}^2\Pi(t,V)\,\varphi(t)^2\,\mathrm{d}[S,S]_t,\label{rem4.5:1}
\end{align}
where $[V,V]$ denotes the quadratic variation of the stochastic process $V$, see, e.g., \cite[3.4.2]{Shreve}.

Instead, considering the option price as a function of time, risky asset price and bond price, by It\^o-D\"oeblin formula, we have
\begin{equation}\label{rem4.5:2}
\mathrm{d}\Pi(t,S,B)=\partial_t\Pi(t,S,B)\,\mathrm{d}t+\partial_S\Pi(t,S,B)\,\mathrm{d}S_t+\partial_B\Pi(t,S,B)\,\mathrm{d}B_t+\frac{1}{2}\partial_{SS}^2\Pi(t,S,B)\,\mathrm{d}[S,S]_t\,,
\end{equation}
therefore, combining equations \eqref{rem4.5:1} and \eqref{rem4.5:2}, we derive a simpler expression for Delta and Gamma of an option on VTS portfolios
\begin{align*}
\partial_S\Pi(t,S,B)&=\partial_V\Pi(t,V)\,\varphi(t),\\
\partial_{SS}^2\Pi(t,S,B)&=\partial_{VV}^2\Pi(t,V)\,\varphi(t)^2.
\end{align*}
\end{remark}

\begin{proposition}\label{Delta:C}
The Delta of a European call option with payoff \eqref{payoff} on VTS and MLVTS portfolios with weight strategies $\widehat{\alpha}=\widehat{\sigma}/\sigma$ and $\widetilde{\alpha}:=\min\{L;\widehat{\sigma}/\sigma\}$ are respectively given by
\begin{equation}\label{delta:VT}
\Delta_{\{\Phi_{\text{call}},V_{\widehat{\sigma}}\}}=\partial_S \Pi\left(t_0,\Phi_{\text{call}}(V_{\widehat{\sigma}}(T))\right)=\frac{v\,\widehat{\sigma}}{s\,\sigma}\,N(d_1),
\end{equation}
\begin{equation}\label{delta:VTl}
\Delta_{\{\Phi_{\text{call}},V_{\widetilde{\sigma}}\}}=\partial_S \Pi\left(t_0,\Phi_{\text{call}}(V_{\widetilde{\sigma}}(T))\right)=
\begin{cases}
\frac{L\,v}{s}\,N(\widehat{d}_1),\quad &\text{for }\sigma<\frac{\widehat{\sigma}}{L},\\
\frac{v\,\widehat{\sigma}}{s\,\sigma}\,N(d_1),&\text{for }\sigma>\frac{\widehat{\sigma}}{L},
\end{cases}
\end{equation}
where 
\begin{align*}
d_1=\frac{\log(v/K)+\left(r+\frac{\widehat{\sigma}^2}{2}\right)(T-t_0)}{\widehat{\sigma}\,\sqrt{T-t_0}},\\
\widetilde{d}_1=\frac{\log(v/K)+\left(r+\frac{L^2\,\sigma^2}{2}\right)(T-t_0)}{L\,\sigma\,\sqrt{T-t_0}}.
\end{align*}
While the Delta of a European put option with payoff \eqref{payoff:p} is
\begin{equation}\label{delta:VT:p}
\Delta_{\{\Phi_{\text{put}},V_{\widehat{\sigma}}\}}=\partial_S \Pi\left(t_0,\Phi_{\text{put}}(V_{\widehat{\sigma}}(T))\right)=\frac{v\,\widehat{\sigma}}{s\,\sigma}\,(N(d_1)-1),
\end{equation}
\begin{equation}\label{delta:VTl:p}
\Delta_{\{\Phi_{\text{put}},V_{\widetilde{\sigma}}\}}=\partial_S \Pi\left(t_0,\Phi_{\text{put}}(V_{\widetilde{\sigma}}(T))\right)=
\begin{cases}
\frac{L\,v}{s}\,(N(\widehat{d}_1)-1),\quad &\text{for }\sigma<\frac{\widehat{\sigma}}{L},\\
\frac{v\,\widehat{\sigma}}{s\,\sigma}\,(N(d_1)-1),&\text{for }\sigma>\frac{\widehat{\sigma}}{L}.
\end{cases}
\end{equation}
\end{proposition}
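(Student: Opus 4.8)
The plan is to reduce the whole statement to the classical Black--Scholes Delta computation performed in the \emph{portfolio} variable, and then to translate it into the risky-asset variable using the chain rule already recorded in Remark~\ref{Rem:Ito}, namely $\partial_S\Pi(t,S,B)=\partial_V\Pi(t,V)\,\varphi(t)$ with $\varphi_{\widehat\sigma}(t)=V(t)\,\widehat\alpha(t)/S(t)$. A preliminary observation justifies differentiating at all: Proposition~\ref{CoV:Prop2} (specialised to the Black--Scholes case, as in the display preceding Section~\ref{S:Vega}) exhibits $\Pi(t_0,\Phi_{\text{call}}(V_{\widehat\sigma}(T)))$ as an explicit smooth function of $v=V_{\widehat\sigma}(t_0)$, so $\partial_V\Pi$ may be computed directly on the closed form rather than via a new Feynman--Ka\v{c} argument.

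First I would treat the standard VTS call. Since $d_1,d_2$ depend on $v$ only through $\log(v/K)$, the price $v\,N(d_1)-K\,e^{-r(T-t_0)}\,N(d_2)$ is, as a function of $v$, exactly the Black--Scholes call price with spot $v$, volatility $\widehat\sigma$, rate $r$ and horizon $T-t_0$. Differentiating, $\partial_v\Pi=N(d_1)+v\,N'(d_1)\,\partial_v d_1-K\,e^{-r(T-t_0)}\,N'(d_2)\,\partial_v d_2$, and since $\partial_v d_1=\partial_v d_2=\bigl(v\,\widehat\sigma\sqrt{T-t_0}\bigr)^{-1}$, the last two terms cancel by the identity $v\,N'(d_1)=K\,e^{-r(T-t_0)}\,N'(d_2)$; this is equivalent to $d_1^2-d_2^2=2\log(v/K)+2r(T-t_0)$, the same relation already used in the proof of Proposition~\ref{Vega:C}. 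Hence $\partial_V\Pi(t_0,V)=N(d_1)$. Then I invoke Remark~\ref{Rem:Ito}: from \eqref{sf:VTp1} and the expression for $\varphi_{\widehat\sigma}$, and using $\widehat\alpha=\widehat\sigma/\sigma$ (Proposition~\ref{Prop:VT}), $V(t_0)=v$, $S(t_0)=s$, one gets $\varphi_{\widehat\sigma}(t_0)=v\,\widehat\sigma/(s\,\sigma)$, and multiplying yields $\Delta_{\{\Phi_{\text{call}},V_{\widehat\sigma}\}}=\frac{v\,\widehat\sigma}{s\,\sigma}\,N(d_1)$, i.e.\ \eqref{delta:VT}.

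For the MLVTS call I would split on the two regimes of $\widetilde\alpha=\min\{L;\widehat\sigma/\sigma\}$. If $\sigma<\widehat\sigma/L$ then $\widetilde\alpha=L$ and $\widetilde\sigma(t)=\min\{L\sigma(t),\widehat\sigma\}=L\sigma$, so the price (Proposition~\ref{Prop:C:VTl}) is again of Black--Scholes form with volatility $L\sigma$; the same cancellation gives $\partial_V\Pi=N(\widetilde d_1)$, and with $\varphi(t_0)=vL/s$ this produces the first branch $\frac{L\,v}{s}\,N(\widetilde d_1)$ of \eqref{delta:VTl}. If $\sigma>\widehat\sigma/L$ then $\widetilde\alpha=\widehat\sigma/\sigma=\widehat\alpha$ and $\widetilde\sigma=\widehat\sigma$, so the MLVTS portfolio coincides with the standard VTS portfolio and the already-proven formula applies verbatim, giving the second branch. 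The put formulas \eqref{delta:VT:p} and \eqref{delta:VTl:p} then follow either by repeating the computation with $N(d_1)-1$ in place of $N(d_1)$ (the cash term cancels identically), or, more economically, by differentiating the put--call parity relation $\Pi_{\text{call}}-\Pi_{\text{put}}=v-K\,e^{-r(T-t_0)}$ with respect to $V$ to get $\partial_V\Pi_{\text{put}}=\partial_V\Pi_{\text{call}}-1$ and multiplying by the same $\varphi(t_0)$. The only genuinely delicate point is the cancellation of the cash contribution in $\partial_V\Pi$, i.e.\ the verification of $v\,N'(d_1)=K\,e^{-r(T-t_0)}\,N'(d_2)$ (and its $\widetilde d$-analogue); everything else is bookkeeping with Remark~\ref{Rem:Ito} and the value of $\widetilde\alpha$ in each regime.
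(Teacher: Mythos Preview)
Your proof is correct and follows essentially the same approach as the paper: apply the chain rule from Remark~\ref{Rem:Ito} to factor $\partial_S\Pi=\partial_V\Pi\cdot\varphi(t_0)$, compute $\partial_V\Pi=N(d_1)$ via the standard Black--Scholes cancellation identity, and then split the MLVTS case on the two regimes of $\widetilde\alpha$. The only cosmetic difference is that you offer put--call parity as an alternative route to the put Deltas, whereas the paper simply states $\partial_V\Pi_{\text{put}}=N(d_1)-1$.
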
 

\begin{proof}
By It\^o calculus' chain rule (see Remark \ref{Rem:Ito})
\begin{equation}\label{DeltaC:p2}
\Delta_{\{\Phi_{\text{call}},V_{\widehat{\sigma}}\}}=\partial_V \Pi\left(t_0,\Phi_{\text{call}}(V_{\widehat{\sigma}}(T))\right)\,\Delta_{V_{\widehat{\sigma}}}.
\end{equation}
Moreover, it is straightforward to obtain
\begin{align}
\partial_V \Pi\left(t_0,\Phi_{\text{call}}(V_{\widehat{\sigma}}(T))\right)&=N(d_1)+v\,N'(d_1)\,\partial_vd_1-K\,e^{-r,(T-t_0)}N'(d_2)\,\partial_vd_2\nonumber\\
&=N(d_1)+\frac{1}{\sqrt{2\pi}\,\sqrt{T-t_0}\,\widehat{\sigma}^2\,v^2}\left(v\,e^{-\frac{1}{2}d_1^2}-K\,e^{-r\,(T-t_0)}e^{-\frac{1}{2}d_2^2}\right)\nonumber\\
&=N(d_1)+\frac{e^{-\frac{1}{2}d_1^2}}{\sqrt{2\pi}\,\sqrt{T-t_0}\,\widehat{\sigma}^2\,v}\left(1-\frac{K}{v}\,e^{-r\,(T-t_0)}e^{-\frac{1}{2}(d_2^2-d_1^2)}\right)=N(d_1),\label{DeltaC:p1}
\end{align}
where the last equality holds since $e^{-\frac{1}{2}(d_2^2-d_1^2)}=v/K\,e^{r\,(T-t_0)}$. Therefore, substituting \eqref{Delta:V} and \eqref{DeltaC:p1} in equation \eqref{DeltaC:p2}, we obtain \eqref{delta:VT}.

The same arguments works for $\Delta_{V_{\widetilde{\sigma}}}$, just taking care that in this case we have 
\[
\Delta_{V_{\widetilde{\sigma}}}=\frac{L\,v}{s},\quad\text{for }\sigma<\frac{\widehat{\sigma}}{L},
\]
since for $\sigma<\frac{\widehat{\sigma}}{L}$, $\widetilde{\alpha}=L$.

For the put option case, we have that $\partial_V \Pi\left(t_0,\Phi_{\text{call}}(V_{\widehat{\sigma}}(T))\right)=N(d_1)-1$. Therefore, analogously as before, we obtain equation \eqref{delta:VT:p}.
\qed
\end{proof}

In figure \ref{delta} we compared the Deltas for calls and puts written on a VTS portfolio with calls and puts written on the risky asset. One can notice that Deltas for VTS-linked options present  an asymptotic behavior for  both low and high volatilities. This is because, for extreme low volatilites, the VTS portfolio finances a great amount of shares through short selling the riskless asset, while, for extreme high volatilities,
 the VTS portfolio invests only a small proportion of its value in the risky asset. See Remark \ref{Rem:Ito} for the mathematical explanation of this effect.

For MLVTS portfolios the analysis of the Deltas with respect to the volatility of the risky asset is not significantly different than the Deltas for standard options, since it suffices to consider the change of variable $\sigma_L=L\,\sigma$ in the standard Delta value and to boost the latter by the maximum leverage parameter $L$.

\begin{figure}[htbp]
\centering
\includegraphics[width=0.48\textwidth]{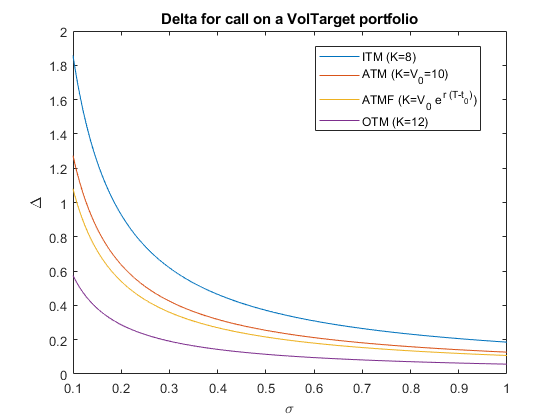}\includegraphics[width=0.48\textwidth]{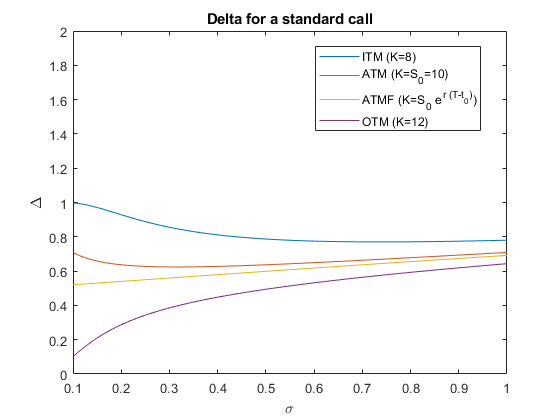}
\includegraphics[width=0.48\textwidth]{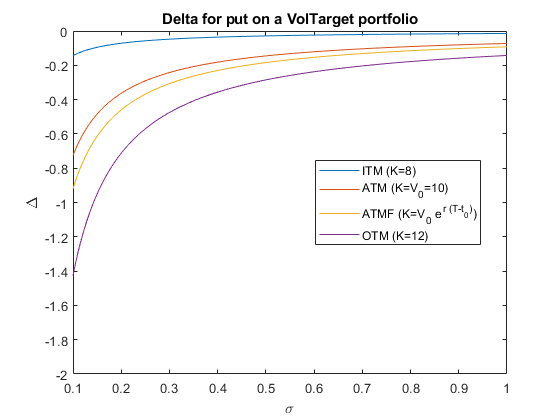}\includegraphics[width=0.48\textwidth]{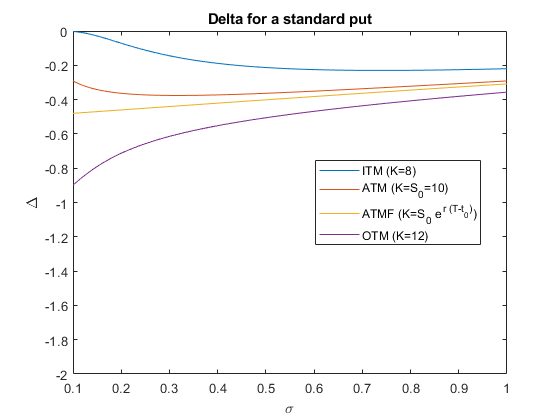}
\caption{The right graphs represent the behavior of the Delta of standard call (top figure) and put (bottom figure) options with respect to different volatility values. The graphs on the left represent the Delta for call and put options written on VTS portfolios. We considered strike prices $K$ in order to obtain an {\it in-the-money} option, an {\it at-the-money} option, an {\it at-the-money-forward} option and an {\it out-of-the-money} option. Notice that the Delta for the VTS-linked options exhibits two asymptotes: the vertical one corresponding to null volatility, and the horizontal one which correspond to a volatility value that goes to infinity. The parameters are fixed as $s=v=10$, $\widehat{\sigma}=0.2$, $\mu=8\%$, $r=5\%$, $T=1$, $t_0=0$ and the volatilities start at $\sigma=0.1$.}\label{delta}
\end{figure}

\subsection{Gamma}
The computation of the Gamma for options on VTS and MLVTS portfolios can be derived as 
in section \ref{S:Delta}, see, in particular, Remark \ref{Rem:Ito}.

\begin{proposition}\label{Gamma:C}
The Gamma of an option with payoff \eqref{payoff} on the VTS and MLVTS portfolios with weight strategies $\widehat{\alpha}=\widehat{\sigma}/\sigma$ and $\widetilde{\alpha}:=\min\{L;\widehat{\sigma}/\sigma\}$ are respectively given by
\begin{equation}\label{gamma:VT}
\Gamma_{\{\Phi_{\text{call}},V_{\widehat{\sigma}}\}}=\partial_{SS}^2 \Pi\left(t_0,\Phi_{\text{call}}(V_{\widehat{\sigma}}(T))\right)=\frac{v\,\widehat{\sigma}}{s^2\,\sigma^2\,\sqrt{T-t_0}}\,f_{N(0,1)}(d_1),
\end{equation}
\begin{equation}\label{gamma:VTl}
\Gamma_{\{\Phi_{\text{call}},V_{\widetilde{\sigma}}\}}=\partial_{SS}^2 \Pi\left(t_0,\Phi_{\text{call}}(V_{\widetilde{\sigma}}(T))\right)=
\begin{cases}
\frac{L\,v}{s^2\,\sigma\,\sqrt{T-t_0}}\,f_{N(0,1)}(\widehat{d}_1),\quad &\text{for }\sigma<\frac{\widehat{\sigma}}{L},\\
\frac{v\,\widehat{\sigma}}{s^2\,\sigma^2\,\sqrt{T-t_0}}\,f_{N(0,1)}(d_1),&\text{for }\sigma>\frac{\widehat{\sigma}}{L},
\end{cases}
\end{equation}
with $\Gamma_{\{\Phi_{\text{put}},V_{\widehat{\sigma}}\}}=\Gamma_{\{\Phi_{\text{call}},V_{\widehat{\sigma}}\}}$ and $\Gamma_{\{\Phi_{\text{put}},V_{\widetilde{\sigma}}\}}=\Gamma_{\{\Phi_{\text{call}},V_{\widetilde{\sigma}}\}}$, where by $f_{N(0,1)}$ we denote the probability density function of a standard normal random variable and 
\begin{align*}
d_1=\frac{\log(v/K)+\left(r+\frac{\widehat{\sigma}^2}{2}\right)(T-t_0)}{\widehat{\sigma}\,\sqrt{T-t_0}},\\
\widetilde{d}_1=\frac{\log(v/K)+\left(r+\frac{L^2\,\sigma^2}{2}\right)(T-t_0)}{L\,\sigma\,\sqrt{T-t_0}}.
\end{align*}
\end{proposition}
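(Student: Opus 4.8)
The plan is to derive the Gamma by the same It\^o chain-rule mechanism already established in Remark \ref{Rem:Ito}, which gives the clean identity $\partial_{SS}^2\Pi(t,S,B)=\partial_{VV}^2\Pi(t,V)\,\varphi(t)^2$, so that the whole computation reduces to (i) the second $V$-derivative of the Black--Scholes price written as a function of the portfolio value, and (ii) the number of shares $\varphi$ held in the portfolio. Step (ii) is immediate from Section \ref{S:Delta}: for the standard VTS, $\varphi_{\widehat\sigma}(t)=V(t)\,\widehat\alpha(t)/S(t)=v\,\widehat\sigma/(s\,\sigma)$ at $t=t_0$, while for the MLVTS in the regime $\sigma<\widehat\sigma/L$ we have $\widetilde\alpha=L$, hence $\varphi_{\widetilde\sigma}(t_0)=L\,v/s$, and for $\sigma>\widehat\sigma/L$ the MLVTS coincides with the standard VTS. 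Squaring these gives the $v^2\widehat\sigma^2/(s^2\sigma^2)$ and $L^2v^2/s^2$ prefactors appearing in \eqref{gamma:VT} and \eqref{gamma:VTl}.

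For step (i), I would start from $\Pi(t_0,\Phi_{\text{call}}(V_{\widehat\sigma}(T)))=v\,N(d_1)-K\,e^{-r(T-t_0)}N(d_2)$ viewed as a function of $v$, and use the already-proved fact (equation \eqref{DeltaC:p1}) that $\partial_v\Pi=N(d_1)$ — the $N'$ terms cancel because of the identity $e^{-\frac12(d_2^2-d_1^2)}=(v/K)e^{r(T-t_0)}$. Differentiating once more, $\partial_{vv}^2\Pi=N'(d_1)\,\partial_v d_1 = f_{N(0,1)}(d_1)\cdot\frac{1}{\widehat\sigma\sqrt{T-t_0}\,v}$, since $\partial_v d_1=1/(\widehat\sigma\sqrt{T-t_0}\,v)$. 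Multiplying by $\varphi_{\widehat\sigma}(t_0)^2=v^2\widehat\sigma^2/(s^2\sigma^2)$ then collapses one power of $\widehat\sigma$ and one power of $v$, yielding exactly $\frac{v\,\widehat\sigma}{s^2\sigma^2\sqrt{T-t_0}}f_{N(0,1)}(d_1)$, which is \eqref{gamma:VT}.

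For the MLVTS in the low-volatility regime, the price is the same Black--Scholes expression but with $\widehat\sigma$ replaced by $L\sigma$ (so $d_1$ becomes $\widetilde d_1$), giving $\partial_{vv}^2\Pi=f_{N(0,1)}(\widetilde d_1)/(L\sigma\sqrt{T-t_0}\,v)$; multiplying by $\varphi_{\widetilde\sigma}(t_0)^2=L^2v^2/s^2$ produces $\frac{L\,v}{s^2\,\sigma\,\sqrt{T-t_0}}f_{N(0,1)}(\widetilde d_1)$, matching the first branch of \eqref{gamma:VTl}; the second branch is literally \eqref{gamma:VT} again. Finally, since Gamma depends on the payoff only through $\partial_{vv}^2\Pi$, and call and put prices differ by the affine term $v-K\,e^{-r(T-t_0)}$ (put--call parity), whose second derivative in $v$ vanishes, we get $\Gamma_{\{\Phi_{\text{put}},\cdot\}}=\Gamma_{\{\Phi_{\text{call}},\cdot\}}$ for free. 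I do not anticipate a genuine obstacle here; the only point requiring care is the bookkeeping of which volatility ($\widehat\sigma$ vs.\ $L\sigma$) and which share count enters in each regime, and making sure the It\^o chain rule of Remark \ref{Rem:Ito} is invoked at $t=t_0$ with $S(t_0)=s$, $V(t_0)=v$ so that the prefactors are evaluated correctly.
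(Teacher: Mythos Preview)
Your proposal is correct and follows essentially the same route as the paper: invoke the identity $\partial_{SS}^2\Pi=\partial_{VV}^2\Pi\cdot\varphi^2$ from Remark \ref{Rem:Ito}, differentiate $\partial_V\Pi=N(d_1)$ once more to get $f_{N(0,1)}(d_1)/(\widehat\sigma\,v\sqrt{T-t_0})$, and multiply by $\varphi(t_0)^2$. Your treatment is slightly more explicit than the paper's in spelling out the MLVTS low-volatility branch and in justifying the put--call Gamma equality via the affine put--call parity term, but the argument is the same.
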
 

\begin{proof}
By Remark \ref{Rem:Ito}, we have  $\partial_{SS}^2\Pi(t,S,B)=\partial_{VV}^2\Pi(t,V)\,\varphi(t)^2$. Computing $\partial_{VV}^2\Pi(t,V)$:
\begin{align*}
\partial_{VV}^2\Pi(t,V)&=\partial_V\left[\partial_V \Pi\left(t_0,\Phi_{\text{call}}(V_{\widehat{\sigma}}(T))\right)\right]\\
&=\partial_V N(d_1)\\
&=N'(d_1)\,\partial_Vd_1\\
&=f_{N(0,1)}(d_1)\,\frac{1}{\widehat{\sigma}\,v\,\sqrt{T-t_0}}\,,
\end{align*}
and since $\varphi(t_0)=\frac{v\,\widehat{\sigma}}{s\,\sigma}$, we obtain \eqref{gamma:VT}.

The Gamma for put VTS-linked options is the same as the Gamma for call VTS-linked options, since the second partial derivatives w.r.t.\,the portfolio value of the price of the two options are the same.
\qed
\end{proof}

In figure \ref{gamma}, we compared the Gamma for standard European options with the Gamma for European options written on VTS portfolios. Notice that, while the Gamma for standard European options exhibits two asymptotes only when the underlying risky asset is {\it at-the-money-forward} (ATMF), i.e. $S=K\,e^{-r\,(T-t_0)}$ which implies
\[
\Gamma_S=\frac{1}{s\,\sigma\,\sqrt{T-t_0}}e^{-\frac{1}{2}\frac{\sigma^2\,(T-t_0)}{4}}\\,,
\]
for the Gamma of VTS-linked options we have always two asymptotes, since for low volatilities also their Gamma is amplified, in fact even more than the Delta. 

\begin{figure}[htbp]
\centering
\includegraphics[width=0.48\textwidth]{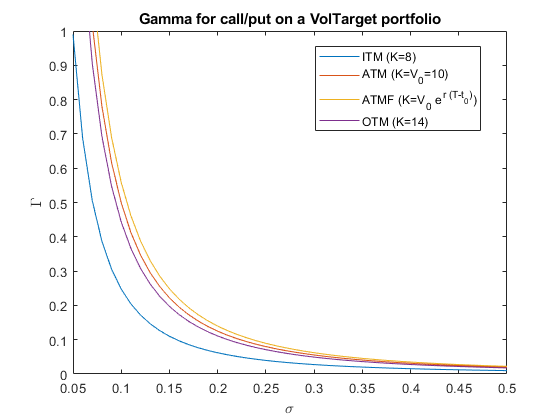}\includegraphics[width=0.48\textwidth]{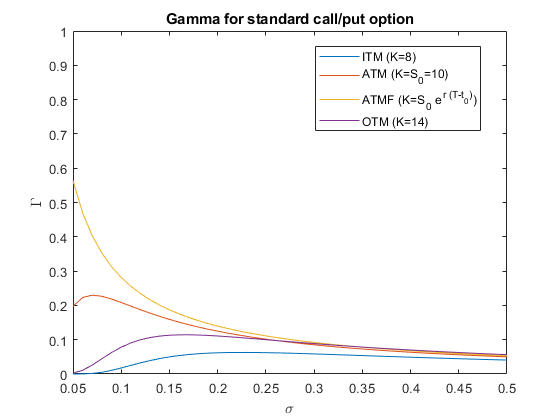}
\caption{The graph on the right represents the behavior of the Gamma of standard call/put options with respect to different volatility values, while the one on the left represents the Gamma for call/put options written on VTS portfolios. As in figure \ref{delta}, we considered strike prices $K$ in order to obtain an {\it in-the-money} option, an {\it at-the-money} option, an {\it at-the-money-forward} option and an {\it out-of-the-money} option. Notice that, once again, also this Greek for the VT options exhibits asymptotes, while this is the case for standard options only when the underlying asset is ATMF. The parameters are fixed as $v=s=10$, $\widehat{\sigma}=0.2$, $\mu=8\%$, $r=5\%$, $T=1$, $t_0=0$ and the volatilities start at $\sigma=0.05$.}\label{gamma}
\end{figure}

%

\section{Numerical simulations}\label{numerics}
To better explain how VTS and MLVTS portfolios work, let us assume that the dynamics of the risky asset evolve according to the Heston model, see, e.g., \cite{Heston} and \cite{Barbu}, for further details and control-theory related problems.
The, we have
\begin{align}
\mathrm{d}S_t&=\mu\,S_t\,\mathrm{d}t+\sqrt{\nu_t}\,S_t\,\mathrm{d}W_t^{(1)},\label{H1}\\
\mathrm{d}\nu_t&=\kappa\,(\theta-\nu_t)\,\mathrm{d}t+\xi\,\sqrt{\nu_t}\,\mathrm{d}W_t^{(2)}\label{H2},
\end{align}
where $W^{(1)}$ and $W^{(2)}$ are $\rho$-correlated Brownian motions, $\nu$ evolves as a Cox-Ingersoll-Ross (CIR) process  representing the instantaneous variance of the risky asset, $\theta$ is the long-variance, $\kappa$ is the rate at which $\nu$ reverts to $\theta$, $\xi$ is the volatility of the volatility, and we assume that the Feller condition holds:
\[
2\,\kappa\,\theta>\xi^2\,,
\]
in order to guarantee the process $\nu$ to be strictly positive. 

\bigskip

Let us consider underlying risky asset's parameters calibrated to values observed in the real data, as in the papers by Morellec et al. \cite{Morellec}, and CIR parameters as in the seminal paper by Samuelson \cite{Samuelson}. Consider the parameters values as shown in Table \ref{tab:par}, namely an adaptation of the ones in \cite{Morellec,Samuelson}, to show representative scenarios explaining the effect of the VTS and the MLVTS.

\begin{table}[h]
\centering
\begin{tabular}{|c|ccccccccccc|}
\hline
Figure &$\kappa$ & $\theta$ & $\xi$ & $\rho$ & $\nu_0$ & $\mu$ & $S_0$ & $r$ & $B_0$ & $V_0$ & $T$\\
\hline
\ref{Heston:VT} & 0.6067 & 0.2207 & 0.2928 & -0.75 & 0.2154 & 8.24\% & 100 & 4.2\% & 20 & 100 & 1\\
\ref{Heston:VT-VTL} & '' & 0.1707 & '' & '' & 0.1654 & '' & '' & '' & '' & '' & ''\\
\hline
\end{tabular} 
\caption{The parameters of the Heston model that we considered in order to simulate the asset dynamics.}\label{tab:par}
\end{table}

We partition the time-interval $[0,T]$ into $N$ equal subintervals of width $T/N$
\[
0=t_0<t_1<\dots<t_N=T
\]
and simulate a realization of the bivariate process $(S,\nu)_t$.
By a modified {\it Euler-Maruyama scheme}, see below, we approximate the path of the corresponding VTS and MLVTS portfolio
\begin{equation}\label{EM:scheme}
V_{\widehat{\sigma}}^{\Delta t}(t_{n+1})=V_{\widehat{\sigma}}^{\Delta t}(t_{n})\left\{1+\frac{\alpha(t_n)}{S(t_n)}\Delta S_n+\frac{1-\alpha(t_n)}{B(t_n)}\,\Delta B_n\right\},\quad\text{for }n\in\{0,\dots,N-1\},
\end{equation}
where  $\Delta S_n:=S(t_{n+1})-S(t_n)$, $\Delta B_n:=B(t_{n+1})-B(t_n)$ and $V^{\Delta t}(0)=v$.

\begin{proposition}
Let $T>0$ be fixed. The numerical scheme \eqref{EM:scheme} is strongly convergent to the solution to \eqref{wealth}., i.e.
\[
\lim_{\Delta t\rightarrow0}\mathbb{E}[|V_T-V_T^{\Delta t}|]=0.
\]
\end{proposition}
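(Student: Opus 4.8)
The plan is to recognise \eqref{EM:scheme} as the Euler--Maruyama discretisation of the linear stochastic differential equation \eqref{wealth}, written in the shorthand $\mathrm{d}V(t)=V(t)\bigl(a(t)\,\mathrm{d}t+b(t)\,\mathrm{d}W(t)\bigr)$ with drift $a(t):=\alpha(t)(\mu(t)-r)+r$ and diffusion $b(t):=\alpha(t)\,\sigma(t)$. Two structural facts should be isolated at the outset: for the VTS one has $b\equiv\widehat{\sigma}$ and for the MLVTS $b(t)=\min\{L\,\sigma(t),\widehat{\sigma}\}\le\widehat{\sigma}$, so the diffusion coefficient is bounded in both cases, whereas the drift carries the possibly singular factor $\alpha(t)=\widehat{\sigma}/\sigma(t)$ (respectively $\min\{L,\widehat{\sigma}/\sigma(t)\}$). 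The first step is to put the recursion in standard Euler form. Since $B$ is deterministic, $\Delta B_n/B(t_n)=e^{r\Delta t}-1=r\,\Delta t+O(\Delta t^2)$, while from \eqref{risky-asset} one has $\Delta S_n/S(t_n)=\mu(t_n)\,\Delta t+\sigma(t_n)\,\Delta W_n+\rho_n$, with $\Delta W_n:=W(t_{n+1})-W(t_n)$ and a remainder satisfying $\mathbb{E}[\rho_n^2\mid\mathcal{F}(t_n)]=O(\Delta t^2)$ (indeed $\rho_n\equiv0$ if the path of $S$ is itself Euler-discretised, in which case its own standard strong error has to be tracked instead). Substituting and collecting terms gives $V^{\Delta t}(t_{n+1})=V^{\Delta t}(t_n)\bigl(1+a(t_n)\,\Delta t+b(t_n)\,\Delta W_n\bigr)+R_n$ with a controllable remainder $R_n$.

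The analytic core is then a Gronwall estimate. Introduce the step function $\eta(t):=t_n$ for $t\in[t_n,t_{n+1})$ and the time-continuous interpolation $\bar V$ of the scheme, so that $\mathrm{d}\bar V(t)=\bar V(\eta(t))\bigl(a(\eta(t))\,\mathrm{d}t+b(\eta(t))\,\mathrm{d}W(t)\bigr)+\mathrm{d}(\text{remainder})$; subtracting \eqref{wealth}, applying the Burkholder--Davis--Gundy and Cauchy--Schwarz inequalities and exploiting that the equation is linear in the state, one closes
\begin{equation*}
\mathbb{E}\Bigl[\sup_{t_0\le t\le T}\bigl|V(t)-\bar V(t)\bigr|^2\Bigr]\le C\,e^{CT}\biggl(\mathbb{E}\!\int_{t_0}^T\!\bigl(|a(s)-a(\eta(s))|^2+|b(s)-b(\eta(s))|^2\bigr)|V(s)|^2\,\mathrm{d}s+\epsilon_{\Delta t}\biggr),
\end{equation*}
where $\epsilon_{\Delta t}\to 0$ absorbs the $R_n$. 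Since the paths $s\mapsto a(s),b(s)$ are continuous outside a Lebesgue-null set (indeed continuous under \eqref{H1}--\eqref{H2}) and $\eta(s)\uparrow s$, the integrand tends to $0$ a.e., and dominated convergence kills the right-hand side once an integrable majorant is available. Equivalently, because \eqref{wealth} is linear one may work with the closed forms $V(T)=v\exp\bigl(\int_{t_0}^T(a-\tfrac12 b^2)\,\mathrm{d}s+\int_{t_0}^T b\,\mathrm{d}W\bigr)$ and $V^{\Delta t}(T)=v\prod_{n}(1+\xi_n)$, $\xi_n:=\alpha(t_n)(S(t_{n+1})/S(t_n)-1)+(1-\alpha(t_n))(e^{r\Delta t}-1)$, pass to logarithms via $\log(1+\xi_n)=\xi_n-\tfrac12\xi_n^2+O(\xi_n^3)$, and check the Riemann/It\^o-sum convergences $\sum_n\xi_n\to\int a+\int b\,\mathrm{d}W$, $\sum_n\xi_n^2\to\int b^2$, $\sum_n|\xi_n|^3\to0$ in probability. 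Either route yields $V^{\Delta t}(T)\to V(T)$ in probability, and the stated conclusion $\mathbb{E}|V_T-V_T^{\Delta t}|\to0$ follows once the family $\{V_T^{\Delta t}\}_{\Delta t}$ is uniformly integrable, i.e.\ $\sup_{\Delta t}\mathbb{E}[|V_T^{\Delta t}|^p]<\infty$ for some $p>1$, which one verifies by conditioning successively on $\mathcal{F}(t_n)$ in the product representation.

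The hard part is exactly the singular drift: when $\sigma(t)$ may approach $0$ --- as in the Heston specification \eqref{H1}--\eqref{H2}, where $\sigma(t)=\sqrt{\nu_t}$ --- the coefficient $a$ is neither bounded nor globally Lipschitz in the state, the textbook strong-convergence theorems for Euler--Maruyama do not apply verbatim, and the quantities $\mathbb{E}\int_{t_0}^T|V(s)|^2/\sigma(s)^2\,\mathrm{d}s$ (and the exponential moments needed for the uniform $L^p$ bound) are not automatically finite. I would handle this by localisation: set $\tau_m:=\inf\{t\ge t_0:\nu_t\le 1/m\}$; on $\{\tau_m>T\}$ the coefficients are bounded and Lipschitz, so the classical result yields $\mathbb{E}\bigl[|V_{T\wedge\tau_m}-V^{\Delta t}_{T\wedge\tau_m}|\bigr]\to0$ for each fixed $m$, and letting $m\to\infty$ one uses that the Feller condition $2\kappa\theta>\xi^2$ keeps $\nu$ strictly positive, hence $\mathbb{P}(\tau_m\le T)\to0$, together with the uniform integrability established above. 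For the MLVTS the whole argument is much lighter, because $\widetilde{\alpha}\le L$ makes both $a$ and $b$ bounded and $t\mapsto\widetilde{\sigma}(t)=\min\{L\,\sigma(t),\widehat{\sigma}\}$ fails to be continuous only on the level set $\{\sigma(t)=\widehat{\sigma}/L\}$, which is Lebesgue-null for the parameters at hand, so one lands squarely within the scope of the standard Euler--Maruyama theory and no localisation is required.
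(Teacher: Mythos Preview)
Your strategy is correct and shares the paper's one-line idea---reduce the modified recursion \eqref{EM:scheme} to the textbook Euler--Maruyama scheme for \eqref{wealth}---but you carry it out at a completely different level of detail. The paper simply writes down the standard Euler step
\[
V_{\widehat{\sigma}}^{\Delta t}(t_{n+1})=V_{\widehat{\sigma}}^{\Delta t}(t_{n})\Bigl\{1+\bigl[\widehat{\sigma}\,\nu(t_n)^{-1/2}(\mu-r)-r\bigr]\Delta t+\widehat{\sigma}\,\Delta W_n^{(1)}\Bigr\},
\]
asserts without argument that it is strongly convergent, observes that $\Delta W_n^{(1)}$ can be recovered from $\Delta S_n/S(t_n)$ in the limit, and declares the substitution done. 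You instead expand $\Delta S_n/S(t_n)$ and $\Delta B_n/B(t_n)$ explicitly, control the remainders, and close the estimate via BDG/Gronwall (or, alternatively, the log-product route), which is the honest way to turn that heuristic into a proof. More importantly, you flag and actually deal with the point the paper's proof ignores: in the Heston specification the drift coefficient $a(t)=\widehat{\sigma}\,\nu_t^{-1/2}(\mu-r)+r$ is neither bounded nor Lipschitz, so the ``classical'' strong convergence of Euler--Maruyama cannot be quoted off the shelf; your localisation at $\tau_m=\inf\{t:\nu_t\le 1/m\}$ together with the Feller condition and a uniform-integrability check is the right repair. What the paper's approach buys is brevity; what yours buys is an argument that survives the singular drift, and a clean separation of the MLVTS case (bounded $\widetilde{\alpha}\le L$, hence standard theory applies directly) from the unbounded VTS case.
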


\begin{proof}
The Euler-Maruyama scheme associated to equation \eqref{wealth} is 
\begin{equation}\label{proof:EM2}
V_{\widehat{\sigma}}^{\Delta t}(t_{n+1})=V_{\widehat{\sigma}}^{\Delta t}(t_{n})\left\{1+\left[\frac{\widehat{\sigma}}{\sqrt{\nu(t_n)}}\,(\mu-r)-r\right]\Delta t+\widehat{\sigma}\,\Delta W_n^{(1)}\right\},
\end{equation}
which i strongly convergent to \eqref{wealth}.
Moreover, we have that 
\begin{equation}\label{proof:EM1}
\lim_{\Delta t\rightarrow 0}\mathbb{E}\left[\frac{\frac{\Delta S_n}{S(t_n)}-\mu\,\Delta t}{\sqrt{\nu(t_n)}}-\Delta W_n^{(1)}\right]=0.
\end{equation}
Substituting \eqref{proof:EM1} in \eqref{proof:EM2}, we obtain \eqref{EM:scheme}.
\qed
\end{proof}

Even if the  Euler-Maruyama scheme is  strongly convergent, it is {\it just} of order 0.5, therefore, aiming at achivieng a higher convergence order,
we also considered the following modification of the {\it Milstein scheme}:
\begin{multline*}
V_{\widehat{\sigma}}^{\Delta t}(t_{n+1})=V_{\widehat{\sigma}}^{\Delta t}(t_n)\,\Biggl\{1+\alpha(t_n)\frac{\Delta S_n}{S(t_n)} +(1-\alpha(t_n))\frac{\Delta B_n}{B(t_n)}\\-\frac{\alpha(t_n)\,(1-\alpha(t_n))}{2}\left[\left(\frac{\Delta S_n}{S(t_n)}\right)^2-\nu(t_n)\,\Delta t\right]\Biggr\},
\end{multline*}
which in general converges strongly to the solution with order 1.

\subsection{Analysis}
Let us underline  that the sensitivity analysis provided along the previous section is well captured by figure \ref{Heston:VT-VTL}. For example, one can notice that the tendency of the path is more or less met, dependently on the volatility instantaneous value, see Section \ref{S:Delta} for the  sensitivity analysis of the VTS portfolio with respect to small changes in the underlying risky asset, i.e. the Delta analysis. Moreover, it is clearly visible that the white noise is affecting the VTS portfolio value linearly, i.e.\,the Vega for the VTS standard portfolio is null, see Section \ref{S:Vega}.

\begin{figure}[htbp]
\centering
\includegraphics[width=\textwidth]{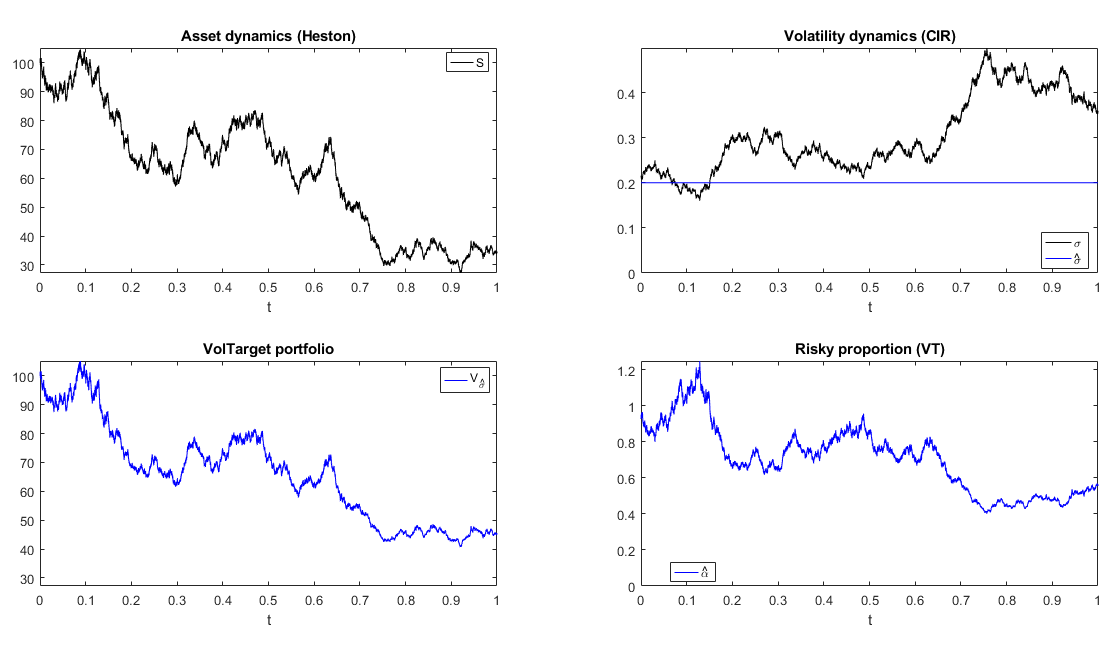}
\caption{The top graphs represent the asset price and the volatility values, simulated as a realization of a Heston model. Here the volatility (top-right figure) is more frequently greater than the target volatility $\widehat{\sigma}=0.2$, and in these cases the risky proportion $\widehat{\alpha}$ is less than one (bottom-right figure). In the bottom-left figure is represented the corresponding realization of the VTS portfolio. For the discretization scheme \eqref{EM:scheme} we considered a time step $\Delta t=10^{-6}$.}\label{Heston:VT}
\end{figure}


\begin{figure}[htbp]
\centering
\includegraphics[width=\textwidth]{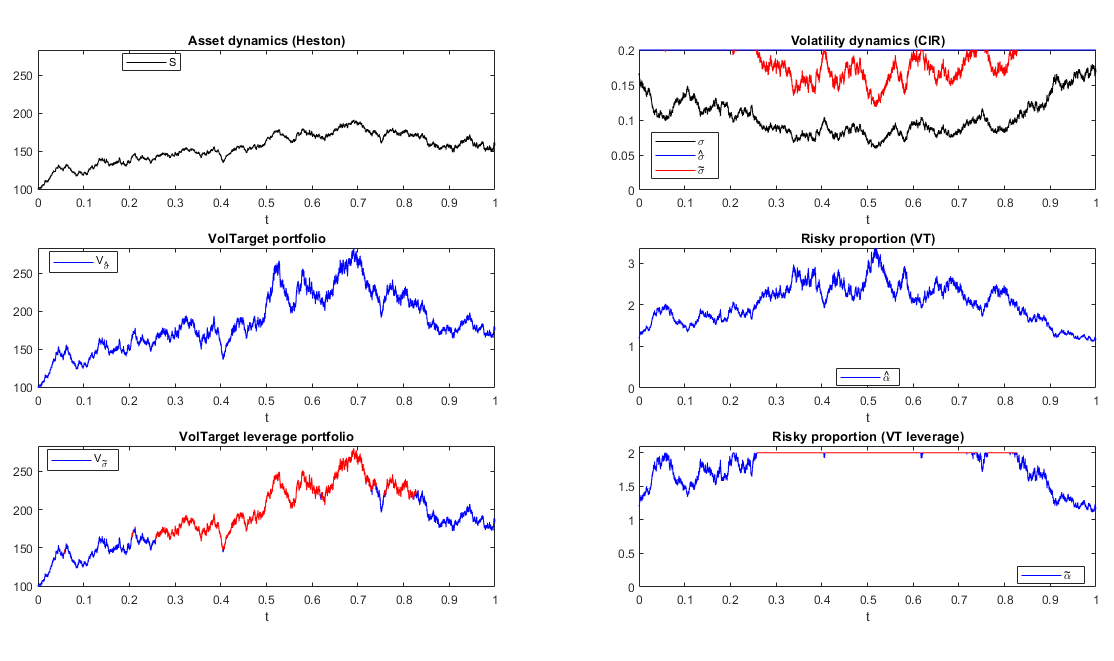}
\caption{The left graphs from the top to the bottom represent the risky asset dynamics, the VT portfolio and the MLVTS portfolio, for $\widehat{\sigma}=0.2$ and $L=2$. The top-right figure represents the volatility of the risky asset (in black), the volatility of the VTS portfolio (in blue) and the effect of the leverage limitation in the MLVTS (in red). In the bottom-left figure are highlighted in red the path section of the MLVTS portfolio in which the leverage effect intervenes. For the discretization scheme \eqref{EM:scheme} we considered a time step $\Delta t=10^{-6}$.}\label{Heston:VT-VTL}
\end{figure}

\section{Extension to the transaction case and concluding remarks}
The present paper presents a first attempt to consider options linked to VTSs from an analytical perspective. We develop closed-end formulas for call and put options linked to VolTarget concepts, as well as for the associated sensitiveness, {\it Greeks}, parameters. 

The results agree with what we would expect from a practitioner view. One can see, how a VolTarget approach can simplify option pricing for structured products and why also key hedging parameters look much easier than for standard options with changing volatility pattern. 
Further analysis should be done relaxing some of the assumptions we made to derive our results, e.g. dropping the {\it non transaction costs} hypotesis.


As an example, we already started to look into the aspect of transaction costs and how these can be embedded into our framework. We hereby point out two possible ways that can be undertaken. The first one consists in a modification of the chosen VTS, i.e. the VTS portfolio will no longer pursue a constant volatility, instead it will aim to have a volatility belonging to a desired interval. Moreover, one could consider to deal with a structural modification, namely considering a restriction of the admissible time interval  in which we will have portfolio weight adjustments, to consider a discrete subset. Namely, the VTS will pursue the target volatility only in a discrete set of time points, instead of considering continuous adjustments. Such modifications are required, when the asset dynamics are assumed to not have a constant volatility, in order to avoid the cumulated transaction costs to be theoretically infinity, which could happen even when the transaction costs are relatively small, see, e.g., 
\cite{Boyle1, Boyle2, Gilster, Leland, Mello}.

We will also investigate how a dynamic asset allocation strategy can be developed within real world scenarios, when a rather constant volatility level can be considered. 


\end{document}